\theoremstyle{plain}
\newtheorem{theorem}{Theorem}[section]
\newtheorem{proposition}[theorem]{Proposition}
\newtheorem{observation}[theorem]{Observation}
\newtheorem{open}[theorem]{Open Question}
\newtheorem{lemma}[theorem]{Lemma}
\newtheorem{corollary}[theorem]{Corollary}
\theoremstyle{definition}
\newtheorem{definition}[theorem]{Definition}
\def\full{false}
\begin{document}

\title{Reconstructing Sets of Strings from Their $k$-way Projections: Algorithms \& Complexity}
\author{Elise Tate\footnote{Department of Computer Science, University of Colorado Boulder, USA, \texttt{jordan.tate@colorado.edu}}\; and
Joshua A. Grochow\footnote{Departments of Computer Science and Mathematics, University of Colorado Boulder, USA, \texttt{jgrochow@colorado.edu}, ORCID: \href{https://orcid.org/0000-0002-6466-0476}{0000-0002-6466-0476}}
}

\maketitle              

\begin{abstract}
Graphs are a powerful tool for analyzing large data sets, but many real-world phenomena involve interactions that go beyond the simple pairwise relationships captured by a graph. In this paper we introduce and study a simple combinatorial model to capture higher order dependencies from an algorithms and computational complexity perspective. Specifically, we introduce the String Set Reconstruction problem, which asks when a set of strings can be reconstructed from seeing only the $k$-way projections of strings in the set. This problem is distinguished from genetic reconstruction problems in that we allow projections from any $k$ indices and we maintain knowledge of those indices, but not which $k$-mer came from which string. We give several results on the complexity of this problem, including hardness results, inapproximability, and parametrized complexity. 

Our main result is the introduction of a new algorithm for this problem using a modified version of overlap graphs from genetic reconstruction algorithms. A key difference we must overcome is that in our setting the $k$-mers need not be contiguous, unlike the setting of genetic reconstruction. We exhibit our algorithm's efficiency in a variety of experiments, and give high-level explanations for how its complexity is observed to scale with various parameters. We back up these explanation with analytic approximations. We also consider the related problems of: whether a single string can be reconstructed from the $k$-way projections of a given set of strings, and finding the largest $k$ at which we get no information about the original data set from its $k$-way projections (i.e., the largest $k$ for which it is ``$k$-wise independent''). 
\end{abstract}

\section{Introduction}
Your friend has a complex data set, but you can only view a subset at a time. Can you figure out what their data set is? There are many different precise instantiations of this general type of question. For example, the ability to extrapolate from an incomplete data set gets at the core of many tasks in machine learning and (artificial) intelligence. 

As another example, this question can also be used to understand when---and which---higher-order interactions are most relevant to a given phenomenon about a complex, multi-part system. That is, even if the entire data set can be viewed or accessed globally, by artificially restricting ones attention to subsets of size $k$, one can see what information about the data set can be gleaned from such $k$-subsets beyond that which can be gleaned from the $(k-1)$-subsets, revealing the ``$k$-th order interactions'' in the data.

In this paper we consider a novel problem that is a combinatorial interpretation of this question, which we refer to as the \textsc{String Set Reconstruction} problem. Consider an $m \times n$ matrix. For each fixed set of $k$ out of the $n$ columns, we get a set of length-$k$ subsequences, called a $k$-way projection. Can all the $k$-way projections be used to reassemble the original data set? How does this change as we change the value of $k$? In this framework, we know which columns we are viewing, but we only see the \emph{set} of those length-$k$ strings. In particular, we do not know which length-$k$ string came from which original length-$n$ string. 
 
This problem is natural from the perspective of studying systems by their higher-order interactions (beyond pairwise) (see, e.\,g., the surveys \cite{GrochowSFIReview,TorresEtAl,BattistonEtAl,BickEtAl}). In particular, the \textsc{String Set Reconstruction} problem can be viewed as a particular instance of asking when $k$-th order interactions suffice to completely determine a given data set. This may be especially beneficial in the context of data sets that one may collect in real-world experiments, where collecting a data set dense enough for traditional machine learning algorithms is difficult or impossible. One example that was inspirational in our approach to this problem is a data set of conflicts among a group of macaques \cite{DanielsKrakauerFlack,LeeDanielsFlackKrakauer}. This data set was collected through over one hundred hours of field work recording every fight that occurred among a group of 48 adult macaques, resulting in 1087 recorded fights. Each fight is recorded as a binary string of length 48 where each individual macaque corresponds to an index value which is a 1 if that macaque was in the fight, and a 0 if it was not. Even though this data set took considerable effort to collect, it is clear that 1087 strings out of the $2^{48}$ possible strings is a very small portion, and to collect enough data to change that would be impossible---indeed, even if they could be observed for an arbitrary long period of time, it's not clear that anywhere close to all $2^{48}$ possible fights would arise. A second example we have considered include disease spread on a floor of a dorm, where each student is either a 0/1 depending on if they are healthy or ill on a given week, and each string corresponds to one week in the semester. Yet another example is a psychological survey on several hundred participants and their perception of discrimination, where each participant corresponds to one string and each digit of the string is their answer for a particular question. All of these examples are data sets with relatively few data points (compared to the maximum possible given the dimension of the data set) but where higher-order interactions come into play.\footnote{We also note that in the regime where the number of strings is exponential in their length, then a runtime that is linear in the input size is the same as being exponential in the string length. With that generous a runtime, much simpler algorithms would suffice in theory, but those algorithms would tend to not be useful in practice.}

In this paper, our main results are two-fold. In terms of complexity, we give a nearly-complete complexity classification of the problems we study, from the viewpoints of worst-case complexity, (in)approximability, and parametrized complexity. In terms of algorithms, we introduce a new algorithmic technique for these problems, generalizing the notion of ``overlap graphs'' from genetic reconstruction \cite{rizzi19}. We demonstrate the efficiency of our algorithm in practice through a series of experiments, as well as heuristic mathematical analysis of how the runtime is observed to scale with various parameters.

A string $x$ is contained in our reconstruction from the $k$-way projections ($k$-reconstruction for short) iff all of its size-$k$ subsequences (not necessarily contiguous) are contained in the corresponding $k$-way projection of our data set. That is, given a set of strings $S$ and another string $x$, $x$ is in the $k$-reconstruction of $S$ iff for each $I \subseteq [n]$, there is a $y \in S$ such that $y|_{I} = x|_{I}$, where $y|_{I}$ denotes the characters in $y$ at the positions indexed by the set $I$. 
As such, for any value of $k$ we will recover all of the strings in the original data set. 
It is possible, however, that for smaller values of $k$ we may recover additional strings as well: there may be strings that do not contradict any of our subsequences in their $k$-way projections, and as such they will be included in our reconstruction. We can either think of $k$ as a parameter that we increase from $1$ upward to eliminate more and more potential strings, or one that we decrease from $n$ downward until we start being unable to eliminate strings outside of the input set. 

This same concept shows up under the name ``$k$-limit'' or ``local limit'' in the proof of certain circuit complexity lower bounds \cite{HJP,GRSS}, but we are not aware of prior studies of the algorithmic problems we study. 

Within this analytical framework, perfect reconstruction is just one extreme. In the next several subsections we discuss several related problems that are interesting both for their potential applications in data analysis, and for the algorithmic questions they raise. There are also several variations to these problems that are significant both in terms of the computational complexity of the problem, as well as the types of algorithms that can be used to tackle the problem. 

\subsection{The problems we study}
In this paper we assume that a set $S$ of strings of length $n$ is given as part of the input.\footnote{Although it would also be interesting to study the variant where only the $k$-way projections are given, and not the set $S$, the complexity can be quite different, and for consistency in this paper we focus on the case where $S$ is given, leaving the other setting for the full version of the paper and future work.}
For simplicity, throughout this paper we focus on the alphabet $\{0,1\}$, though the exact characters are irrelevant. Almost everything in the paper immediately generalizes to alphabets of arbitrary fixed size; when this is not the case we will note it.

\paragraph{Point of perfect reconstruction.}

The most apparent question, and the one that we will primarily focus on throughout this paper, is determining the value of $k$ that allows us to perfectly reconstruct our original data set. We give this value a name:

\newcommand{\Recon}{Recon}
\begin{definition}[with Yoav Kallus] \label{def:recon}
Given an alphabet $A$ and $S \subseteq A^n$, $x \in A^n$ is \emph{$k$-reconstructible} from $S$ if for every $I \subseteq [n]$ with $|I|=k$, $\exists s \in S$ such that $s|_I = x|_I$. The \emph{$k$-reconstruction} of $S$ is the set of all strings $k$-reconstructible from $S$, denoted $\Recon_k(S)$. A set $S$ is called \emph{$k$-reconstructible} if $S = \Recon_k(S)$. The \emph{point of perfect reconstruction} is the least integer $k$ such that $S=\Recon_k(S)$.
\end{definition}

We will refer to the computational problem of computing the point of perfect reconstruction given $S$ as the \textsc{Perfect Reconstruction} problem. This value is an interesting structural parameter of the set of strings $S$, offering some insight into higher-order interactions in $S$. See Fig.~\ref{fig:perfRecon} for an example.

\begin{figure}
    \centering
    \includegraphics[width=\textwidth]{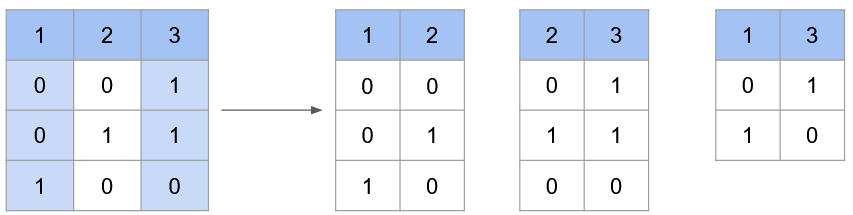}
    \caption{\textbf{An example of a set $S$ of three length-$3$ strings whose point of perfection reconstruction is $< 3$.} To the left is the starting data set $S = \{001, 011,100\}$, where the strings correspond to rows after the first, and the indices 1,2,3 are listed in the first row. To the right are the 2-way projections for index sets $\{1,2\}$, $\{2,3\}$, and $\{1,3\}$, respectively. From the $\{1,3\}$ projection, we see that indices 1 and 3 completely define one another: $x_1 = 1-x_3$.
    This, coupled with the $\{1,2\}$ projection, allows us to fill in the 2nd column and perfectly reconstruct the original data set.    
    }
    \label{fig:perfRecon}
\end{figure}

There are sets whose point of perfect reconstruction is $n$, the largest possible value. For example, $S = \{10^{n-1}, 010^{n-2}, 0010^{n-3}, \dotsc, 0^{n-1} 1\}$ (the ``standard basis vectors'' of an $n$-dimensional vector space). Since $0^{n-1}$ appears in all  $(n-1)$-way projections, we have that $0^n \in \Recon_{n-1}(S) \backslash S$.

Computing the point of perfect reconstruction can be done by a brute-force, exponential-time algorithm. Namely, the algorithm takes each possible string of length $n$ and generates all $n \choose k$ of its subsequences, and for each check those subsequences against the list of subsequences from the original data set (in the same $k$ positions) to see if it is in the input set or not. The runtime of this procedure is $O(2^n m \binom{n}{k})$, where $m$ is the number of starting strings.

\paragraph{Point of no information ($k$-wise independence).}

To better understand the dependence on $k$, we can look at the ``opposite extreme'' of the point of perfect reconstruction, namely, the largest $k$ for which the $k$-way projections give essentially \emph{no information} about the original set of strings. To clarify what we mean by ``no information'', we begin with a straightforward observation:

\begin{observation} \label{obs:1recon}
Let $A$ be a finite alphabet, $S \subseteq A^n$, and for $i \in [n]$ let $A_i = S|_{\{i\}}$ be the 1-projections. Then $S$ is 1-reconstructible iff $S = A_1 \times A_2 \times \dotsb \times A_n$, iff $|S| = \prod_{i=1}^n |A_i|$.
\end{observation}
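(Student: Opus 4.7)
The plan is to unpack Definition~\ref{def:recon} in the special case $k=1$, which collapses $\Recon_1(S)$ to an explicit Cartesian product, and then read off the two claimed equivalences.

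First, I would identify $\Recon_1(S)$ directly. For $x \in A^n$, the condition for $x \in \Recon_1(S)$ is that for every singleton $\{i\} \subseteq [n]$ there exists $s \in S$ with $s|_{\{i\}} = x|_{\{i\}}$, i.e.\ $x_i \in A_i$. Hence
\[
\Recon_1(S) \;=\; A_1 \times A_2 \times \cdots \times A_n.
\]
Next, I would observe the trivial containment $S \subseteq A_1 \times \cdots \times A_n$, which holds because each $s \in S$ has $s_i \in A_i$ by the very definition of $A_i$. Substituting the explicit form of $\Recon_1(S)$ into the definition of 1-reconstructibility, the statement ``$S = \Recon_1(S)$'' becomes exactly ``$S = A_1 \times \cdots \times A_n$'', giving the first \emph{iff}.

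For the second \emph{iff}, the forward direction is immediate since equal sets have equal cardinalities. For the reverse, since $S \subseteq A_1 \times \cdots \times A_n$ and both sets are finite, matching cardinalities force equality. There is no real obstacle in this argument; the only subtlety worth stating is the observation $S \subseteq \Recon_1(S)$, which makes the first equivalence a one-sided containment check rather than a two-sided one.
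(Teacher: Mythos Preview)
Your proof is correct and is exactly the natural argument. The paper does not actually give a proof of this observation, treating it as self-evident; your write-up fills in precisely the details one would expect.
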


Here the Cartesian product $A_1 \times \dotsb \times A_n$ denotes the set of all length-$n$ strings $x \in A^n$ such that $x_i \in A_i$ for all $i$. 

In particular, every character occurs in every position if and only if the 1-reconstruction of $S$ is equal to all of $A^n$. This leads us to the following definition:

\begin{definition}
Given a set of strings $S \subseteq A^n$ over an alphabet $A$, the \emph{point of no information} is the largest $k$ such that the $k$-reconstruction of $S$ is $A^n$.
\end{definition}

A probability distribution on $\{0,1\}^n$ that is $k$-wise independent is supported on a set $S$ of strings whose point of no information is $\geq k$. The converse need not hold, however. A tighter connection with $k$-wise independence can be obtained if one requires not only that the $k$-way projection sets contain all $2^k$ possibilities, but that each such possibility comes from the same number of strings in the original set; equivalently, looking at the $k$-way projection \emph{multi}-sets, rather than just sets. We leave exploring this connection further for future work.

There will always be no information at $k=1$ unless there is some column that doesn't use all the characters; the point of no information may also be much larger, as in the case of $S$ be the set of length-$n$ strings of even parity, where the point of no information is $k=n-1$. 

As with the point of perfect reconstruction, there is a simple brute-force algorithm for this problem. The point of no information is at least $k$ iff every length-$k$ string occurs in every $k$-way projection. 
Thus we can simply check each of the $n \choose k$ lists of subsequences to make sure each set contains all possible $2^k$ subsequences. The runtime is $O\left(\binom{n}{k} |S|\right)$. 

\paragraph{String containment.}
When we are trying to determine the difference between perfect reconstruction and nearly perfect reconstruction, we come across a critical question: given some string $x$, how can we determine if $x$ is in $\Recon_k(S)$? We will refer to this as \textsc{String Containment} problem. More precisely, given some string $x$ and a data set $S$, can we determine some $k$-length subsequence that is present in $x$ but not in any string in $S$? Such a subsequence exists iff $x \notin \Recon_k(S)$. The \textsc{String Non-Containment} problem turns out to be isomorphic to the classic \textsc{Hitting Set} problem (Thm.~\ref{SCtoHS}), in a way that preserves approximability and multi-parameter complexity.

\subsection{Motivating our algorithmic technique with a larger example} \label{sec:overlap_intro}
We introduce our new algorithm with a larger example that exhibits several of the phenomena we've seen so far. 
Fig.~\ref{fig:largeOLG} shows a data set $S$ where $k=2$ contains no information, $k=3$ has some information, and $k=4$ perfectly reconstructs $S$: $\{0,1\}^5 = \Recon_1(S) = \Recon_2(S) \supsetneq \Recon_3(S) \supsetneq \Recon_4(S) = S$.

$k=2$: One can verify by ``brute force inspection'' that all $\binom{5}{2}$ 2-way projections contain all 4 possible subsequences of length 2.

$k=3$: As there are only 7 strings, $k=3$ must contain some constraints or information, as there aren't enough strings to exhibit all 8 possible subsequences of size 3. We illustrate the connection with \textsc{Hitting Set} from \S\ref{sec:complexity} here, by showing that $x=11110 \in \Recon_3(S) \backslash S$, so $k=3$ is not the point of perfect reconstruction. $x$ differs from $01110 \in S$ only in position 1, so any index set $I$ such that $x|_I \notin S|_I$ must have $1 \in I$. Similarly, $x$ differs from $11111 \in S$ in index $5$, so we have $\{1,5\}$ in any distinguishing index set $I$. The three possibilities are $I=125$, $135$, or $145$, and the subsequence of $x$ for each of these is $110$. For $I=125$ or $135$, $11100 \in S$ has restriction $110$, agreeing with $x$; and similarly for $I=145$, $10010 \in S$ agrees with $x$. Thus $x \in \Recon_3(S)$.

$k=4$: To show $\Recon_4(S)=S$, we give an example of our main algorithmic technique: the overlap graph. See Fig.~\ref{fig:largeOLG}. Using the ordering $1,2,3,4,5$ on the columns, the overlap graph will have levels, one for each cyclically contiguous set of $k$ indices in our ordering: $[1,2,3,4]$, $[2,3,4,5]$, $[3,4,5,1]$, and so on. The nodes in each level correspond to the projections of the strings in $S$ onto each size-$k$ index set. There is a directed edge between nodes at one level and the next if they agree on the overlapping $k-1$ indices. From this setup, the length-$n$ directed cycles in this graph correspond bijectively to the set of strings that would be reconstructed if only looking at these particular $k$-way projections (the overlap graph only sees $n$ out of the $\binom{n}{k}$ possible $k$-way projections). The overlap graph may have extra cycles that can be eliminated by other $k$-subsets of indices not used in the graph---which can be done using calls to \textsc{Hitting Set}---but the overlap graph does not eliminate any $k$-reconstructible strings. In Fig.~\ref{fig:largeOLG}, since the number of directed cycles (yellow highlights) is the same as the number of strings we started with, at $k=4$ all strings outside $S$ are excluded by using the $k$-way projections in this overlap graph, so $\Recon_4(S)=S$. We give more algorithmic details of the overlap graph in \S\ref{sec:algorithm}.

\begin{figure}
    \centering
    \includegraphics[width=0.25\textwidth]{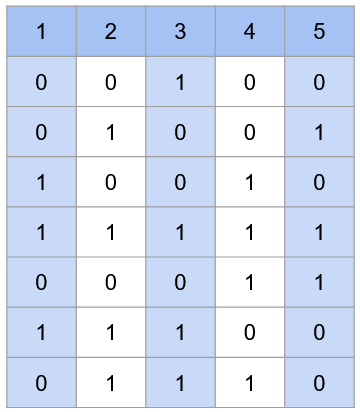}
    \includegraphics[width = 0.72\textwidth]{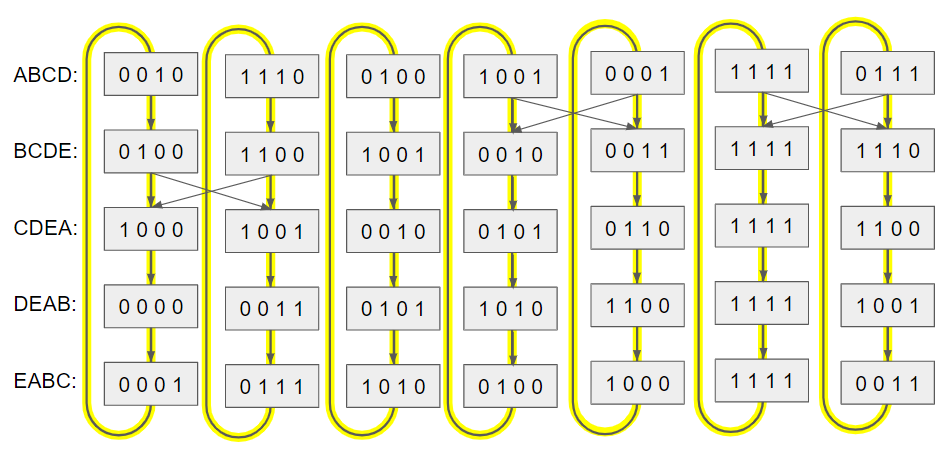}
    \caption{
    An example containing 7 strings of length 5 with the indices enumerated across the top of the table, along with one of its overlap graphs for $k=4$. 
    The cycles of length 5 are highlighted yellow; the number of $5$-cycles is the same as $|S|$, so $k=4$ is the point of perfect reconstruction.  The arrangement of vertices into columns is not inherent, but merely for convenience.}
    \label{fig:largeOLG}
\end{figure}

\subsection{Related work} \label{sec:related}

Overlap graphs are commonly used in DNA sequence assembly methods (see, e.\,g., \cite{rizzi19}). The core concept is to make a graph where each node represents one sequence read, and edges are drawn between nodes with matching prefix/suffix pairs. When using this to reconstruct a DNA sequence, you are generally constructing a single string from pieces where you do not have the location information for those pieces. In contrast, we have the location information but do not know how many sequences we are producing. 

We must address these critical differences in our adaptation of the overlap graph. First, we are reassembling many strings (of the same length) rather than only one of unknown length; second, we maintain knowledge of our indices; third, we have access to all subsequences instead of only contiguous ones. 
These first two features allow us to develop a more rigid structure in our overlap graphs when compared to their bioinformatics counterparts, while the third feature will come into play in our preprocessing steps (\S\ref{sec:preprocessing}).

The benefit of knowing the indices of our subsequences is observed in the layered structure of our overlap graph. In overlap graphs seen in bioinformatics, there is no layered structure and edges are drawn between two nodes with overlaps of all sizes. By knowing our indices we can instead draw edges only between adjacent layers of the overlap graph without losing information. 
We also benefit from complete saturation of our overlap graph; we can observe every window of indices rather than risking there being a gap, as there might be in DNA sequence reassembly. This means we can be confident that our overlap graph contains all of our information.

We can also consider the problem of computing $\Recon_k(S)$ from $S$. The set $\Recon_k(S)$ consists precisely of the solutions to a corresponding $k$-CSP instance, in which the underlying constraint hypergraph is the complete $k$-uniform hypergraph, and the constraints are the $k$-way projections. This connects to the literature on counting and enumerating CSPs (e.\,g., \cite{Petke,Zivny,CKS,Feder}). However, we note that in the context of CSPs, typically the set of allowed constraints is fixed in advance, and which sets of $k$ variables they are applied to varies with the instance. In contrast, in our setting, the constraints are always applied to all $k$-tuples of the $n$ variables, but what constraints we use depend on the input. 

\section{Computational complexity analysis} \label{sec:complexity}

\subsection{Preliminaries for complexity results} \label{sec:prelim}
We assume the reader is familiar with standard complexity classes $\mathsf{NP}, \mathsf{coNP}$ and the polynomial hierarchy (e.\,g., \cite{sipser,CKS}), as well as the multi-parameter complexity classes $\mathsf{FPT}$ and $\mathsf{W[2]}$ (e.g., \cite{Downey}). The complexity class $\mathsf{DP}$ \cite{Papadimitriou84} consists of those languages $L$ such that there are two languages $L_1, L_2 \in \mathsf{NP}$ with $L = L_1 \backslash L_2 = \{x : x \in L_1 \text{ and } x \notin L_2\}$. For the Strong Exponential Time Hypothesis (SETH) see \cite{Calabro09,IP01}. For the Unique Games Conjecture see \cite{Khot02}.
Two sets $A, B \subseteq \Sigma^*$ are \emph{p-isomorphic} if there is a bijection $f \colon \Sigma^* \to \Sigma^*$ such that both $f$ and $f^{-1}$ are computable in polynomial time, and $f(A)=B$. In this case we write $A \cong_p B$. 

For the proof of Theorem~\ref{SCtoHS} we will need the following additional preliminaries. A language $A$ is \emph{p-paddable} if there are polynomial-time computable functions $S,D$ such that, for all strings $x,y$, $S(x,y) \in A \Leftrightarrow x \in A$ and $D(S(x,y)) = y$. The idea is that $S$ adds ``spare'' parts to an instance of $A$ to make it longer (``pad it'') without changing its membership in $A$, and $D$ shows that the ``spare'' parts are uniquely and efficiently identifiable. In particular, the existence of $D$ implies that for each $x$, the map $y \mapsto S(x,y)$ is 1-to-1. We will use the following now-classic theorem:

\begin{theorem}[{\cite[Thm.~1]{Berman75}}] \label{thm:BH}
If $A$ is $\mathsf{NP}$-complete \& p-paddable, then $A \cong_p SAT$. 
\end{theorem}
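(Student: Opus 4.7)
The plan is to implement the polynomial-time analogue of the Cantor--Schr\"oder--Bernstein argument that underlies the Berman--Hartmanis isomorphism machinery. Two ingredients are needed: many-one reductions in each direction between $A$ and $SAT$, and p-padding that upgrades each reduction to be simultaneously injective, length-increasing, and polynomial-time invertible on its image (``honest''). The p-isomorphism is then built by following orbits of these upgraded reductions back and forth.

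First I would obtain base reductions. Since $A \in \mathsf{NP}$ and $SAT$ is $\mathsf{NP}$-hard, the Cook--Levin theorem gives $g_0 \colon A \to SAT$; since $A$ is $\mathsf{NP}$-complete, there is also $f_0 \colon SAT \to A$. The language $SAT$ is p-paddable via the standard trick of appending dummy variables/clauses that encode an auxiliary string, and $A$ is p-paddable by hypothesis, yielding padding pairs $(S_{SAT}, D_{SAT})$ and $(S_A, D_A)$. Then I would upgrade: define $\hat{f}(x) := S_A(f_0(x), x)$, which is still a reduction because $S_A(\,\cdot\,, x)$ preserves membership in $A$, is injective because $D_A$ recovers $x$ from $\hat{f}(x)$, and has a polynomial-time inverse on its image (compute $D_A$ and then check that reapplying $S_A \circ f_0$ reproduces the input). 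By iterating the padding with a second argument of sufficiently large length, one arranges $|\hat{f}(x)| > |x|$ for all $x$. The symmetric construction applied to $g_0$ yields $\hat{g} \colon A \to SAT$ with the same three properties.

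For the isomorphism itself, I would apply the Berman--Hartmanis orbit construction. For each $x \in \Sigma^*$, trace the chain obtained by alternating applications of $\hat{f}^{-1}$ and $\hat{g}^{-1}$, taking each step only when the current element is in the image of the relevant map (decidable in polynomial time by honesty). Because $\hat{f}$ and $\hat{g}$ are length-increasing, the lengths strictly decrease along the chain, so it terminates after at most $|x|$ steps at some ``source'' that is not in the image of the opposite map. Classify $x$ by whether the source lies on the $A$-side or the $SAT$-side of $\Sigma^*$, and define $h(x)$ to apply either $\hat{f}$ (resp.\ $\hat{g}$) forward or its inverse backward, with the direction determined by the classification. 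Standard orbit analysis then verifies that $h \colon \Sigma^* \to \Sigma^*$ is a well-defined bijection with $h(A) = SAT$, and that both $h$ and $h^{-1}$ are polynomial-time computable because each chain has polynomial length and each step is polynomial-time.

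The main obstacle is the orbit bookkeeping in the final step: one must verify that the back-and-forth chains partition $\Sigma^*$ into disjoint orbits on which a consistent direction yields a bijection that preserves membership in $A$ versus $SAT$. The length-increasing property is what keeps each chain to polynomial length (without it the traversal could fail to terminate or blow up exponentially); honesty is what makes each individual preimage step polynomial-time; and injectivity is what ensures each chain is uniquely determined. Verifying that these three properties compose correctly to yield a p-isomorphism, and in particular that the orbit-based definition of $h$ is independent of the starting point within an orbit, is the technical heart of the argument.
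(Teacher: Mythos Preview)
The paper does not prove this theorem; it is quoted as a classical result (with citation to Berman) and used as a black box in the proof that \textsc{String Non-Containment} is p-isomorphic to \textsc{SAT}. Your sketch is the standard Berman--Hartmanis argument---pad each base reduction to be injective, length-increasing, and polynomial-time invertible on its image, then run the polynomial-time Cantor--Schr\"oder--Bernstein orbit construction---and it is correct in outline.
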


\subsection{String (Non-)Containment is Hitting Set}
We will show that \textsc{String Non-Containment} is essentially the same problem as \textsc{Hitting Set}, from the viewpoints of worst-case, approximation, fixed-parameter, and fine-grained complexity. The \textsc{Hitting Set} decision problem is: given a collection $S$ of subsets of a finite universe $U$, and an integer $k$, does there exist a set $H \subseteq U$ of size $k$ such that every set in $S$ contains at least one element of $H$? 

To state some of our hardness results, we use an additional parameter, beyond $k$. Namely, given a set $S \subseteq \{0,1\}^n$ of strings, let
\[
d(S) := \max_{x \in \{0,1\}^n} \min_{s \in S} d_{Hamming}(x,s) = \max_{x} d_{Hamming}(x,S),
\]
where $d_{Hamming}(x,s)$ denotes the Hamming distance between $x$ and $s$.

\begin{theorem}
\label{SCtoHS}
\textsc{String Non-Containment} is essentially the same problem as \textsc{Hitting Set}. In particular: 
\begin{enumerate}
    \item The two are p-isomorphic, and thus  \textsc{String Containment} is $\mathsf{coNP}$-complete.
    \item The minimum $k$ such that $x \notin \Recon_k(S)$ can be $d(S)$-approximated in polynomial time. Assuming the Unique Games Conjecture, the factor of $d(S)$ is optimal.     
    \item Finding the exact minimum $k$ for which $x \notin \Recon_k(S)$ is $\mathsf{DP}$-complete. 

    \item \textsc{String Containment}, when parametrized by $k$ and $d(S)$, is in $\mathsf{FPT}$.
    
    \item When parametrized only by $k$, \textsc{String Containment} is $\mathsf{coW[2]}$-complete.    

    \item SETH is equivalent to: For all $\varepsilon > 0 $, there is a $d$ such that \textsc{String Containment} with $d(S) \leq d$ is not solvable in time $O(2^{(1-\varepsilon)n})$. 
\end{enumerate}
\end{theorem}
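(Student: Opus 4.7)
The central reduction is: for each $s \in S$, define the \emph{disagreement set} $D_s := \{i \in [n] : x_i \neq s_i\}$. Over the binary alphabet, $x|_I = s|_I$ iff $I \cap D_s = \emptyset$, so $x|_I \notin S|_I$ iff $I$ hits every $D_s$. Thus the minimum $k$ with $x \notin \Recon_k(S)$ equals the minimum hitting-set size for $\{D_s : s \in S\}$ on universe $[n]$. Conversely, a Hitting Set instance $(U, \mathcal{C}, k)$ lifts to String Non-Containment by taking $n = |U|$, $x = 0^n$, and letting each $s_C$ be the indicator string of $C \in \mathcal{C}$, so that $D_{s_C} = C$. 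Both reductions are polynomial time and preserve $n$, $k$, and the set system.

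Parts 1, 3, and 5 follow by direct transfer. Both problems are $\mathsf{NP}$-complete (HS classically; SNC via the reduction above) and p-paddable: pad HS by adjoining fresh universe elements lying in no set; pad SNC by appending columns on which $x$ agrees with every $s \in S$, leaving every $D_s$ unchanged. Theorem~\ref{thm:BH} then yields p-isomorphism with SAT, and hence of SNC with HS, giving Part 1; String Containment is $\mathsf{coNP}$-complete as the complement of SNC. Part 3 follows from the $\mathsf{DP}$-completeness of exact Hitting Set (proved analogously to exact Set Cover via \textsc{SAT-UNSAT}), and Part 5 from the classical $\mathsf{W[2]}$-completeness of HS parametrized by solution size (Downey--Fellows).

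Parts 2, 4, and 6 genuinely use $d(S)$. The key observation is that $\min_s |D_s| = \min_s d(x,s) \leq d(S)$, so every instance contains a ``small'' constraint. For Part 4, a bounded search tree that branches on the smallest surviving $D_s$ has branching factor $\leq d(S)$ whenever an unexcluded $s \in S$ lies at distance $\leq d(S)$ from $x$; the residual case (all survivors far from $x$) is handled via a polynomial kernel in $k$, yielding $\mathsf{FPT}$ in $k + d(S)$. For Part 2, I would apply the $f$-approximation of HS by LP rounding with $f = d(S)$, enabled by a preprocessing that discards every $D_s$ dominated by a smaller $D_{s'}$; the matching UGC lower bound comes from the $(d - \varepsilon)$-inapproximability of $d$-bounded Vertex Cover/Hitting Set (Khot--Regev and extensions) instantiated at $d = d(S)$. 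Part 6 transfers from the SETH-equivalent fine-grained complexity of bounded Hitting Set (Cygan--Dell--Lokshtanov--Marx--Nederlof--Okamoto--Paturi--Saurabh), with $d(S)$ playing the role of the set-size bound after padding.

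The main obstacle will be precisely aligning $d(S)$ with the Hitting Set parameters in Parts 2 and 6: the covering radius $d(S) = \max_y \min_s d(y,s)$ bounds the \emph{minimum} disagreement-set size, whereas the standard $d$-HS approximation and SETH lower bounds are phrased in terms of the \emph{maximum} set size. Making this alignment work in both directions will likely require a padding construction in HS $\to$ SNC that inflates every set to size near $d(S)$, together with an algorithmic argument that any $s$ with $d(x,s)$ much larger than $d(S)$ is effectively implied by closer strings. Once this alignment is in hand, Parts 1, 3, and 5 are routine through the disagreement-set reduction.
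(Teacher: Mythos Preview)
Your disagreement-set reduction is exactly the paper's, and Parts 3 and 5 go through the same way (the paper starts Part~3 from \textsc{Exact Max Independent Set} via \textsc{Exact Min Vertex Cover} rather than directly from \textsc{SAT-UNSAT}, but that is cosmetic). There is, however, a genuine gap in your Part~1 argument: p-paddability in the sense of Theorem~\ref{thm:BH} requires a function $S(\text{instance},y)$ from which the \emph{arbitrary} string $y$ can be efficiently recovered, and ``appending columns on which $x$ agrees with every $s$'' says nothing about how $y$ is encoded or how a decoder locates the boundary between the original instance and the padding. The paper supplies an explicit self-delimiting construction: it doubles every character of the original instance (so the original portion consists of adjacent pairs $(a,a)$), appends $y$ encoded as pairs $(a,\,a{+}1 \bmod |\Sigma|)$, and doubles $k$; the decoder then peels off the longest suffix of non-doubled pairs. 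Some device of this kind is required, not optional.

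On Parts 2, 4, and 6 you have correctly isolated the crux---that $d(S)$ bounds $\min_s|D_s|$, not the maximum set size that drives the standard approximation, FPT, and SETH results for \textsc{Hitting Set}---but your proposed repairs do not close the gap. Discarding dominated sets in Part~2 still leaves the maximum surviving set size unbounded by $d(S)$, so the usual $f$-approximation does not yield ratio $d(S)$; and in Part~4 your $\leq d(S)$ branching bound is guaranteed only at the root (after removing all $s$ with $s_i\neq x_i$ there need not remain any $s$ at distance $\le d(S)$ from $x$), while the ``residual case'' is left unspecified. The paper itself does not work these alignments out: after remarking that ``$d(x,S)\le d(S)$ corresponds precisely to the size of the sets,'' it simply cites Khot--Regev, a Downey--Fellows exercise, and Cygan et al.\ as black boxes. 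So your instinct that the real work lives in your final paragraph is right; just be aware that the paper's proof does not supply that work either.
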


\begin{proof}
We start by showing the equivalence of the two problems, and then use this equivalence to show the various parts of the theorem. 

Given an instance $(S,x,k)$ of \textsc{String Non-Containment}, define the following instance of \textsc{Hitting Set}: 
\begin{itemize}
\item If $S$ is a set of length-$n$ strings, then our universe $U = [n]$.

\item For each $s \in S$, add a set $T_s$ which is the subset of $[n]$ consisting precisely of the indices $i$ where $x_i \neq s_i$.

\item The $k$ for \textsc{Hitting Set} is the same as the $k$ for \textsc{String Non-Containment}.
\end{itemize}
We show that there is a hitting set of size $k$ for the above instance if and only if $x$ is not contained in the $k$-reconstruction of $S$. Suppose there is a hitting set $H \subseteq [n]$ of size $k$. We claim that $H$ consists of the indices of a $k$-window in which $x|_H$ does not appear. Since $H$ is a hitting set, for each $s \in S$ there is some $i$ such that $i \in T_s \cap H$. By the definition of $T_s$, this means that $x_i \neq s_i$. Thus, in the window indexed by $H$, $x$ differs from every $s \in S$, so $x$ is not in $\Recon_k(S)$.

Conversely, given a \textsc{Hitting Set} instance $(T_1, \dotsc, T_m; k)$ where $T_i \subseteq U=[n]$ for all $i$, we build the following instance of \textsc{String Non-Containment}. 
\begin{itemize}
\item The strings in our set will be the $\{0,1\}$-indicator functions of the sets $T_i$.

\item The string $x$ will be the all-$0$s string.

\item The $k$ is the same $k$.
\end{itemize}
Suppose the string $x$ is not contained in the $k$-reconstruction of $S$. Then there is a $k$-window, $H \subseteq [n]$, such that $x=0^n$ differs from every $s_j \in S$ at some index $i_j \in H$. But since $x=0^n$, differing from $x$ is the same as saying that $s_i = 1$. In other words, if we consider $H$ as a subset of $[n]$, then $i \in H \cap T_j$, and thus $H$ is a hitting set for the original hitting set instance.

Now we turn to the statements in the theorem:
\begin{enumerate}
\item It is standard that \textsc{Hitting Set} is p-isomorphic to \textsc{CNF-SAT}, so it suffices for us to show the same for \textsc{String Non-Containment}. We will exhibit functions $S,D$ so that Thm.~\ref{thm:BH} applies (see Sec.~\ref{sec:prelim} for the properties needed of $S$ and $D$). For each string $x$, let $d(x)$ denote the string where each character of $x$ is doubled: $x_0 x_0 x_1 x_1 x_2 x_2 \dotsb x_n x_n$. Identify the alphabet $\Sigma$ with the integers modulo $m = |\Sigma|$, and let $e(x)$ be the string where each character is replaced by a pair consisting of the original character $x_i$ followed by $x_i+1 \pmod{m}$. For example, when $m=3$, we have $d(021) = 00\ 22\ 11$ and $e(021)=01\ 20\ 12$ (spaces added for easier visual parsing). Define $S(S,x,k; y) := (S'(y), d(x)e(y), 2k)$, where $d(x)e(y)$ denotes the concatenation, and $S'(y)$ is as follows: for each string $s \in S$, the string $d(s)e(y)$. That $S$ is computable in polynomial-time is clear. The reason we doubled digits and encoded $y$ as we did was to construct the inverse/un-padding function $D$: $D(S'(y),d(x)e(y),2k)$ looks at its second string, and because of the encoding $d(x)e(y)$, can tell uniquely what $y$ was: look at the longest suffix consisting of pairs of digits of the form $(a,(a+1 \pmod{m}))$, and take the first digit of each pair. 

It remains to show that $x$ is in the $k$-reconstruction of $S$ if and only if $d(x)e(y)$ is in the $2k$-reconstruction of $S'(y)$. Let $n = |x|$ and $m = |y|$. Say that a $2k$-window is ``padding-avoiding'' if all its indices are at most $2n$. Since the same string $e(y)$ is appended both to $d(x)$ and to $d(s)$ for all $s \in S$, it follows that if $W \subseteq [2n+2m]$ is any window, then $d(x)e(y)$ does not agree with $S'(y)|_W$ if and only if $d(x)$ does not agree with $S'(y)|_{W \cap [2n]}$. Thus, if $d(x)e(y)$ disagrees with $S'(y)$ in some window of size $\leq 2k$, then it disagrees with $S'(y)$ in some padding-avoiding window of size at most $2k$ (and vice versa, but the opposite direction is immediate). We thus restrict our attention to padding-avoiding windows.

Next, since the columns in $d(x)$ are simply doubled versions of those of $x$, we get a correspondence between padding-avoiding windows of $S'(y)$ and arbitrary windows of $S$. Namely, if $I' \subseteq [2n]$ is a padding-avoiding window, then $I := \{\lfloor i/2 \rfloor : i \in I'\} \subseteq [n]$ is a window such that $x|_I \notin S|_I \Leftrightarrow d(x)e(y)|_{I'} \notin S'(y)|_{I'}$. Furthermore, $|I'| \leq 2|I|$. Thus, we get that $x$ is in the $k$-reconstruction of $S$ iff $d(x)e(y)$ is in the $2k$-reconstruction of $S'(y)$, as claimed.

\item From the proof of the above reductions, it is clear that the minimum $k$ in which $x$ is not in the $k$-reconstruction of $S$ is the same as the smallest hitting set in the corresponding \textsc{Hitting Set} instance. Thus, this same value can be approximated to the same extent that the size of the minimum hitting set can be approximated. The result is then immediate from the main result of Khot \& Regev \cite[\S~4]{KhotRegev}. Note that, in our reduction, the Hamming distance $d(x,S) \leq d(S)$ corresponds precisely to the size of the sets in the \textsc{Hitting Set} instance. To match up their language with ours: Khot \& Regev use the language of vertex cover in a hypergraph, where the sets in the \textsc{Hitting Set} instance give the hyperedges in the hypergraph, so a $k$-uniform hypergraph is the same a \textsc{Hitting Set} instance in which all the sets have size $k$.

\item The problem under consideration here is: given $(S,x,k)$, is $k$ the minimum value such that $x$ is not in the $k$-reconstruction of $S$? Since the \textsc{String Non-Containment} decision problem is in $\mathsf{NP}$ (the witness exhibits the window where $x$ differs from all strings in $S$), we see that the exact minimum problem is in $\mathsf{DP}$: $k$ is the exact minimum iff $x \in \Recon_k(S)$ and $x \notin \Recon_{k-1}(S)$.

To see $\mathsf{DP}$-hardness, we reduce from a known $\mathsf{DP}$-complete problem: \textsc{Exact Max Independent Set}. Here the input is $(G,k)$ where $G$ is a graph, and the output is to decide whether $k$ is the size of the largest independent set in $G$. As is well-known, a set $I \subseteq V(G)$ is an independent set if and only if $V(G) \backslash I$ is a vertex cover, so $k$ is the exact max independent set size iff $n-k$ is the exact min vertex cover size. Thus \textsc{Exact Min Vertex Cover} is also $\mathsf{DP}$-complete. Finally, as \textsc{Vertex Cover} is the special case of \textsc{Hitting Set} where all the sets have size 2, from the reductions above we then immediately get that \textsc{Exact Min String Non-containment} is $\mathsf{DP}$-complete. 

\item Standard, e.\,g., \cite[Ex.~3.8.1]{Downey}.

\item Follows from the fact that \textsc{Hitting Set} is $\mathsf{W[2]}$-complete \cite[Cor.~23.2.2(5)]{Downey} (for which Downey \& Fellows credit Wareham).

\item Follows from the same result for \textsc{Hitting Set} \cite[Thm.~1.1(2)]{CyganEtAl}. 

\end{enumerate}
This completes the proof.
\end{proof}

\subsection{Point of No Information}
The point of no information is the point at which every $k$-window has all $2^k$ distinct subsequences. This problem is in $\mathsf{coNP}$ because if a given $k$ is not the point of no information, you can provide the window and a missing subsequence as witness, which could be checked by checking each of the $m$ subsequences. 

\subsection{Perfect Reconstruction}
The decision version of \textsc{Perfect Reconstruction}---where $k$ is part of the input and the output decides whether the point of perfect reconstruction is at most $k$, or equivalently whether $S = \Recon_k(S)$---appears to lie somewhere between the first and second levels of the polynomial hierarchy, and we make this more precise in this section.

There is no known quick guaranteed verification that a given $k$ is the point of perfect reconstruction, because this would require confirming that every possible string is either in the data set or has some window of $k$ indices that is not observed in our data set. The negative is similarly difficult to prove; to verify that the point of perfect reconstruction is greater than $k$, we would need to provide at least one string that does not have a window of size $k$ missing from our set, which would require proving there is no hitting set of that size. 

There is a higher-level version of \textsc{Hitting Set} that matches the \textsc{Perfect Reconstruction} problem, built using the \textsc{String Non-Containment} reduction of Thm.~\ref{SCtoHS}. In the following definition, given two sets $S,X$, we define $S \oplus X$ to be their symmetric difference: $S \oplus X := (S \backslash X) \cup (X \backslash S)$.

\begin{quotation}
\noindent \textsc{Max-Min Toggled Hitting Set}

\noindent \textit{Input}: A collection of subsets $S_1, \dotsc, S_m$ of $\{1,\dotsc,n\}$, and a natural number $k$

\noindent \textit{Decide}: For all $X \subseteq \{1,\dotsc,n\}$ such that $X \notin \{S_1, \dotsc, S_m\}$, does the collection of sets $S_1 \oplus X, S_2 \oplus X, \dotsc, S_m \oplus X$ have a hitting set of size at most $k$?
\end{quotation}

\begin{corollary}
The \textsc{Perfect Reconstruction} problem is equivalent to the \textsc{Max-Min Toggled Hitting Set} problem.
\end{corollary}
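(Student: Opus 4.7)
The plan is to lift the instance-wise reduction of Theorem~\ref{SCtoHS} to the universal quantifier that defines \textsc{Perfect Reconstruction}, after identifying binary strings of length $n$ with subsets of $[n]$ via indicator functions. Under this identification, a set $S$ of strings corresponds to a collection $\{S_1,\dotsc,S_m\}$ of subsets of $[n]$, and any candidate string $x$ corresponds to some $X\subseteq [n]$.

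First I would observe that the set $T_s=\{i:x_i\neq s_i\}$ constructed in the proof of Theorem~\ref{SCtoHS} is precisely the symmetric difference $S_s\oplus X$. By that theorem, $x\notin\Recon_k(S)$ is therefore equivalent to the toggled family $\{S_1\oplus X,\dotsc,S_m\oplus X\}$ admitting a hitting set of size at most $k$.

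Next I would use the trivial inclusion $S\subseteq\Recon_k(S)$ to rewrite the condition $S=\Recon_k(S)$ as ``every $x\notin S$ satisfies $x\notin\Recon_k(S)$''. Translated through the bijection, this becomes: for every $X\subseteq[n]$ with $X\notin\{S_1,\dotsc,S_m\}$, the toggled family $\{S_i\oplus X\}_{i=1}^m$ admits a hitting set of size at most $k$. That is exactly the defining condition of \textsc{Max-Min Toggled Hitting Set}, giving one direction of the equivalence; the other direction applies the same bijection to an arbitrary \textsc{Max-Min Toggled Hitting Set} instance to produce an equivalent \textsc{Perfect Reconstruction} instance.

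The main thing to verify---and essentially the only potential subtlety---is that the quantifier ``for all $x\notin S$'' aligns precisely with ``for all $X\notin\{S_1,\dotsc,S_m\}$'' under the bijection, and that the hitting set instance built in Theorem~\ref{SCtoHS} depends on $x$ only through the symmetric differences $s\oplus x$, so that ``toggling'' by any $X$ produces the correct family. There is no real obstacle beyond this bookkeeping; the computational content is already contained in Theorem~\ref{SCtoHS}, and the corollary simply asserts that that reduction respects the outer universal quantifier.
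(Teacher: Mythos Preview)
Your proposal is correct and follows essentially the same argument as the paper: identify strings with subsets via indicator functions, note that the sets $T_s$ from Theorem~\ref{SCtoHS} are exactly the symmetric differences $S_s\oplus X$, and then lift the per-string equivalence to the universal quantifier using $S\subseteq\Recon_k(S)$. If anything, you are slightly more explicit than the paper about the bijection working in both directions and about why $T_s=S_s\oplus X$.
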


In fact, the two problems are ``essentially the same'' in the same manner as in Thm.~\ref{SCtoHS}, but because we do not know about (in)approximability or completeness of either \textsc{Perfect Reconstruction} or \textsc{Max-Min Toggled Hitting Set}, we cannot make so thorough a statement about their complexity.

\begin{proof}
We use the reductions from the proof of Thm.~\ref{SCtoHS}. If $(S,k)$ is an instance of the \textsc{Perfect Reconstruction} problem, we get an instance of \textsc{Max-Min Toggled Hitting Set} where the $i$-th string in $S$ gets treated as an indicator function of a subset $S_i \subseteq [n]$; the collection of all such $S_i$ is our instance of \textsc{Max-Min Toggled Hitting Set}. Then for a string $x \in \{0,1\}^n$, let $X \subseteq [n]$ be the set of which $x$ is its indicator function. Then by the proof of Thm.~\ref{SCtoHS}, we have $x \notin \Recon_k(S)$ if and only if the $X$-toggled version of the \textsc{Hitting Set} instance above has a hitting set of size at most $k$. And thus $S = \Recon_k(S)$ iff $(S_1,\dotsc,S_m; k)$ is a yes-instance of the \textsc{Max-Min Toggled Hitting Set} problem.
\end{proof}

This is very similar to the structure of min-max problems proven to be $\mathsf{\Pi_2 P}$-complete \cite{Ko1995}. However there is one critical difference that leads us to believe this problem might not actually be hard for the second level of $\mathsf{PH}$. The \textsc{Perfect Reconstruction} problem (or \textsc{Max-Min Toggled Hitting Set}) can be fully defined by a single instance of \textsc{Hitting Set}, and all other subproblems are closely related to the original instance as above---they are merely ``toggled'' versions of it---whereas in other problems that are hard for the second level of the polynomial hierarchy that we are aware of, the subproblems for differential values of the universally quantified parameter can be unrelated. 

However, the complexity class $\mathsf{\Pi_2 P}[\star k, 2]$, introduced by de Haan and Szeider in the context of multi-parameter complexity \cite{dHS,paraCompendium19}, seems like a closer fit for \textsc{Perfect Reconstruction}. Rather than define this class intrinsically, we define it by one of its standard complete problems: An instance is a quantified Boolean formula of the form $\forall x \exists y \varphi$ where $\varphi$ has weft 2,\footnote{We recall the definition of weft from, e.\,g., \cite{Downey}: an infinite set of Boolean formulas $\varphi_1, \varphi_2, \dotsc$ has weft at most 2 if there is a constant $c$ such that for all $i$, on any path from a leaf to the root of $\varphi_i$, there are at most 2 gates of fan-in greater than $c$. Or, in standard if somewhat less-precise parlance, $\varphi = (\varphi_i)_{i=1,2,3,\dotsc}$ has weft at most 2 if on any leaf-to-root path there are at most 2 gates of unbounded fan-in.} and an integer $k$ (in the multi-parameter setting, $k$ is standardly used as the parameter as well), and the question is to decide whether for every assignment to $x$, there is an assignment to $y$ \emph{of Hamming weight $k$}, such that $\varphi(x,y)=1$. 

\begin{proposition} \label{thm:Pi2P}
Testing whether a given set of strings is perfectly reconstructible from its $k$-way projections is in $\mathsf{\Pi_2 P}[\star k, 2]$.
\end{proposition}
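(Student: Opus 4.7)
The plan is to express the decision problem ``is $S = \Recon_k(S)$?'' directly as an instance of the $\mathsf{\Pi_2 P}[\star k, 2]$-complete QBF described in the paragraph before the proposition, with the same $k$ serving both as the reconstruction parameter and as the Hamming weight bound on the existential witness.

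First I would unpack the definition: $S = \Recon_k(S)$ holds iff for every $x \in \{0,1\}^n$, either $x \in S$ or there exists an index set $I \subseteq [n]$ with $|I| = k$ such that for every $s \in S$ some $i \in I$ satisfies $x_i \neq s_i$. Encode $I$ by its $\{0,1\}^n$-indicator $y$; then $|I|=k$ is exactly ``$y$ has Hamming weight $k$'', matching the problem template. This leads me to write
\[
\varphi(x,y) \;:=\; \Bigl(\bigvee_{s \in S} \bigwedge_{i=1}^{n} \ell^{=}_{s,i}(x_i)\Bigr) \;\vee\; \Bigl(\bigwedge_{s \in S} \bigvee_{i=1}^{n} \bigl(y_i \wedge \ell^{\neq}_{s,i}(x_i)\bigr)\Bigr),
\]
where $\ell^{=}_{s,i}(x_i)$ denotes the literal $x_i$ if $s_i = 1$ and $\neg x_i$ otherwise, and $\ell^{\neq}_{s,i}$ is its negation. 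The left disjunct says ``$x \in S$'' and the right says ``the window $I = \{i : y_i = 1\}$ distinguishes $x$ from every $s \in S$''. The QBF $\forall x\ \exists y\ \varphi(x,y)$ with $|y|_1 = k$ then holds iff $S = \Recon_k(S)$.

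Next I would check the weft. On any leaf-to-root path through the right disjunct one encounters the inner fan-in-2 AND $y_i \wedge \ell^{\neq}_{s,i}$, then the unbounded OR $\bigvee_i$ of fan-in $n$, then the unbounded AND $\bigwedge_s$ of fan-in $|S|$; through the left disjunct one encounters the unbounded AND $\bigwedge_i$ then the unbounded OR $\bigvee_s$; in either case at most two unbounded-fan-in gates occur on the path, followed by the outermost fan-in-2 OR combining the two disjuncts. Hence $\varphi$ has weft at most 2, which together with the equivalence above places \textsc{Perfect Reconstruction} in $\mathsf{\Pi_2 P}[\star k, 2]$.

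The main subtlety — and the sole reason the problem lands in weft 2 rather than weft 3 — is the observation that each $s_i$ is a hardwired constant of the input, so $\ell^{=}_{s,i}$ and $\ell^{\neq}_{s,i}$ are literals rather than gates, and $y_i \wedge \ell^{\neq}_{s,i}(x_i)$ is a single bounded AND. If the strings of $S$ were themselves variables, the innermost ``$x_i \neq s_i$'' test would need its own XOR gate and the two stacked unbounded gates would lie strictly below it, pushing the weft to 3. Everything else in the argument is routine bookkeeping of gate fan-ins.
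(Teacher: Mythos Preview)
Your proposal is correct and takes essentially the same approach as the paper: both construct the identical weft-2 formula $\bigl(\bigvee_{s}\bigwedge_{i}[x_i=s_i]\bigr)\vee\bigl(\bigwedge_{s}\bigvee_{i}(y_i\wedge[x_i\neq s_i])\bigr)$ and then verify that each leaf-to-root path hits at most two unbounded-fan-in gates. Your closing remark---that the $s_i$ being hardwired constants is what keeps the inner comparison a single literal rather than an XOR gate, and hence keeps the weft at 2 rather than 3---is a nice clarification that the paper leaves implicit.
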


We leave open the question of completeness:

\begin{open} \label{open:perfect}
Is testing whether $S = \Recon_k(S)$ complete for $\mathsf{\Pi_2 P}[\star k, 2]$ under fpt reductions?
\end{open} 

A positive answer would imply that there is no fpt-reduction from \textsc{Perfect Reconstruction} to \textsc{UNSAT} unless there is a subexponential reduction from $\exists \forall \textsc{DNF-SAT}$ to \textsc{SAT} \cite[Cor.~40]{dHS}.

\begin{proof}[Proof \if\full\else sketch \fi of Prop.~\ref{thm:Pi2P}]
We can naturally translate a given instance of our problem by considering $x$ to be a particular string we are checking, and $y$ the chosen $k$ indices to confirm whether that string is in our reconstructed set or not. If $S = \Recon_k(S)$, then for all strings $x$ there must be some choice of indices where $x$ disagrees with every  string in $S$ (note that quantifying over all strings in the original data set can be done in polynomial time in the input size, so it does not ``count'' as an additional unbounded quantifier), OR the string $x$ must be in $S$. We can use this to build a boolean formula in a natural way, and observe that that construction produces a boolean formula of weft 2.

\if\full
For a given string in our data set $s_i \in S$, we construct a formula $c_i$ for each $i$ as follows: If index 1 is chosen AND our random string disagrees with $s_i$ at index 1, OR index 2 is chosen AND our random string disagrees with $s_i$ at index 2, OR ... OR index $n$ is chosen AND our random string disagrees with $s_i$ at index $n$. If $b \in \{0,1\}$, let $x_i^b = x_i$ if $b=0$ and $\neg x_i$ if $b=1$. Then our clauses are: 

$$c_i = \left( y_1 \land x_1^{s_{i,1}} \right) 
    \lor 
    ( y_2 \land x_2^{s_{i,2}} )
    \lor \dots \lor \left( y_n \land x_n^{s_{i,n}} \right)
$$

Our final quantified formula is then:
$$
(\forall x)(\exists y) [(wt(y)=k) \land \left( \bigwedge_{i=1}^m c_i \right) \lor \bigvee_{i=1}^m \bigwedge_{j=1}^n (\neg x_j^{s_{i,j}})]
$$
The formula is clearly in $\mathsf{\Pi_2 P}[\star k]$; to see that it is in $\mathsf{\Pi_2 P}[\star k, 2]$, observe that the formula involves at most 2 nested unbounded quantifiers, so it has weft 2. 
\fi
This completes the proof. \qedhere
\end{proof}

\if\full
\begin{example}
    Let's say we have a data set with three strings of length four: {0011, 1000, 0101}. This would translate to the following boolean equation:

    \begin{align*}
\forall X \exists Y_k & \left[ ((y_1 \land x_1) \lor (y_2 \land x_2) \lor (y_3 \land \neg x_3) \lor (y_4 \land \neg x_4)) \land \right. & \text{(0011)} \\
& ((y_1 \land \neg x_1) \lor (y_2 \land x_2) \lor (y_3 \land x_3) \lor (y_4 \land x_4)) \land &  \text{(1000)} \\
& \left. ((y_1 \land x_1) \lor (y_2 \land \neg x_2) \lor (y_3 \land x_3) \lor (y_4 \land \neg x_4)) \right] \lor & \text{(0101)} \\
& (\neg x_1 \land \neg x_2 \land x_3 \land x_4) \lor & \text{(0011)}\\
& (x_1 \land \neg x_2 \land \neg x_3 \land \neg x_4) \lor & \text{(1000)} \\
& (\neg x_1 \land x_2 \land \neg x_3 \land x_4) & \text{(0101)}
\end{align*}
\end{example}
\fi

\begin{observation} \label{obs:perfect1}
Testing whether $1$ is the point of perfect reconstruction of a set of $m$ strings of length $n$ can be done in linear time $O(nm)$ on a RAM. 
\end{observation}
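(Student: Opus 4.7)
The plan is to invoke Observation~\ref{obs:1recon}, which reduces the task of deciding $S = \Recon_1(S)$ to verifying the single identity $|S| = \prod_{i=1}^n |A_i|$, where $A_i := S|_{\{i\}}$ is the set of characters appearing in column $i$ of the $m \times n$ input matrix. Granting this reformulation, the algorithm is essentially forced.

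First, I would make a single pass over the $nm$ entries of the matrix. For each column $i$, I would maintain a fixed-size bit-vector recording which characters have appeared so far in that column; for a constant-size alphabet this uses $O(1)$ space per column and $O(1)$ time per matrix entry on a RAM, so the pass runs in $O(nm)$ time. Reading off $|A_i|$ from each bit-vector afterwards costs another $O(n)$.

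Second, I would form the product $P = \prod_{i=1}^n |A_i|$ iteratively, multiplying the factors one at a time. To avoid large-integer arithmetic, if the running product ever strictly exceeds $m$ I would short-circuit with the answer ``no'' (since then certainly $P \neq m = |S|$); otherwise $P$ stays bounded by $m$ and fits in $O(\log(nm))$ bits, so every multiplication and the final comparison to $m$ is $O(1)$ on a RAM. This step is $O(n)$, giving a total runtime of $O(nm) + O(n) = O(nm)$. The algorithm returns ``yes'' iff $P = m$.

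There is essentially no obstacle beyond bookkeeping: the observation is an algorithmic restatement of Observation~\ref{obs:1recon}. The only subtlety is whether the input is guaranteed to be a set rather than a list of strings with possible duplicates; if deduplication is required, it can be performed in $O(nm)$ time by $n$ rounds of counting-sort-based radix sort followed by a linear scan, preserving the overall bound. Analogously, if one wants to distinguish ``least $k$ equals $1$'' from the degenerate case $S = A^n$ (where arguably $\Recon_0(S) = S$ already), one additionally verifies $m < |A|^n$ in $O(n)$, which again does not affect the asymptotics.
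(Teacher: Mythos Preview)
Your proof is correct and follows essentially the same approach as the paper: both invoke Observation~\ref{obs:1recon} to reduce to the test $|S|=\prod_i |A_i|$, compute each $|A_i|$ via a per-column bit-array in a linear scan, and multiply the $|A_i|$ with an early-exit once the running product exceeds $m$. Your additional remarks on deduplication and the degenerate $k=0$ case are extras not present in the paper but do not change the argument.
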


The proof mostly involves a careful accounting of the arithmetic operations needed to verify the condition from Obs.~\ref{obs:1recon}.

\begin{proof}
The point of perfect reconstruction is $1$ if and only if the set is the direct product of its 1-way projections. To test whether this is the case, suppose $n_i$ distinct letters occur in index $i$. Then $S$ is 1-reconstructible iff $|S| = \prod n_i$. To see that this can be computed in linear time, suppose there are $m$ strings of length $n$ over an alphabet of size $a$, so the input size $\Theta(nm \log a)$. For each index $i=1,\dotsc,n$, the algorithm goes through all $m$ strings and keeps track of which letters it has seen so far, in a bit-array of length $a$. Every time it updates the array $a$ by changing some entry from $0$ to $1$, it also updates the value of $n_i$. As $n_i \leq a$, it is an integer of at most $\log a$ many bits, that gets incremented by $1$ at most $\min\{a,m\}$ times, which happens for each of the $n$ indices, so this all takes time at most $O(nm \log a)$, which is linear in the input size. Finally, we multiply the $n_i$ one at a time, and if the value ever exceeds $|S|$ we stop and answer No. Thus the bit-length we need to keep for all of these is never more than $\log m$, and we perform at most $n$ multiplications, for a total cost of $O(n (\log m)^2) \leq O(nm)$, which is again linear in the input size. This completes the proof.
\end{proof}

\begin{theorem} \label{thm:2recon}
Over an alphabet of size 2, testing whether $2$ is the point of perfect reconstruction can be done in time $O(n^2 (m + \log n))$ for sets of $m$ strings of length $n$.
\end{theorem}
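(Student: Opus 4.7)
The plan is to encode $\Recon_2(S)$ as the solution set of a 2-SAT instance built from the pairwise projections, and then verify that solution set equals $S$ by enumeration.

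First, I would compute all $\binom{n}{2}$ projections $R_{ij}$ in $O(n^2 m)$ time by scanning through $S$; over the binary alphabet each $R_{ij}$ is one of the 16 subsets of $\{0,1\}^2$. At the same time, I would verify via Observation~\ref{obs:perfect1} in $O(nm)$ that $1$ is not already the point of perfect reconstruction (else answer No).

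Second, I would reduce to a 2-SAT instance $\Phi$ by exploiting the classification of $R_{ij}$. Size-$1$ projections fix both coordinates; size-$2$ projections that pin one coordinate fix that coordinate; size-$2$ equality/inequality projections produce signed equivalences among free coordinates, which I merge with union-find carrying parity tags; size-$3$ projections contribute a single 2-SAT clause forbidding the missing combination; size-$4$ projections are vacuous. Collapsing fixed and equivalent coordinates yields $\Phi$ on $c \leq n$ class-variables with at most $\binom{c}{2}$ clauses. Let $T \subseteq \{0,1\}^c$ be its solution set and $P$ be the image of $S$ in $\{0,1\}^c$. Then $|P|=|S|$, $P \subseteq T$, and $S = \Recon_2(S)$ iff $P = T$. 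By Post's theorem on bijunctive relations, $T$ is the smallest median-closed superset of $P$, so equivalently $P = T$ iff $P$ is already closed under coordinate-wise median.

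Third, I would test $P = T$ by enumerating solutions of $\Phi$ with polynomial delay (e.g., the bijunctive enumeration of Creignou--H\'ebrard), hashing $S$ beforehand for $O(n)$-time membership queries. I would output No as soon as either (a) an enumerated solution lies outside $S$, or (b) strictly more than $|S|$ distinct solutions have been produced; otherwise the enumeration terminates with all $\leq |S|$ solutions lying in $S$, and I would output Yes. At most $|S|+1 = O(m)$ solutions are generated, and the projection computation combined with union-find bookkeeping (with balanced-tree or hash operations contributing an $O(n^2 \log n)$ term for string comparisons) yields the claimed $O(n^2(m + \log n))$ bound.

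The main obstacle I anticipate is controlling the per-solution delay of the 2-SAT enumeration so that the total fits inside the budget rather than blowing up to $O(n^3 m)$ or worse. The key leverage point is that after collapse the constraint graph on class-variables is nearly complete: whenever two distinct classes exist, the pair has a genuine (size-$3$) constraint, for otherwise they would have been merged at step two. This tight structure should make each class-variable heavily constrained by its neighbors, enabling a recursive branching enumeration with amortized delay $O(n^2)$ per output; alternatively, one could directly bound $|T|$ polynomially in $c$ using the density of the 2-SAT and enumerate by exhaustive branching up to the $|S|+1$ cutoff.
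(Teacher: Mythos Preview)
Your approach is essentially the paper's: encode $\Recon_2(S)$ as the solution set of a 2-SAT instance built from the pairwise projections, then enumerate solutions with polynomial delay and stop after $m+1$ have been produced. The union-find collapse and the median-closure remark are correct but unnecessary embellishments; the paper simply writes down the 2-SAT on all $n$ variables and proceeds.

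The gap is in your final paragraph. The structural claim you lean on---that any two distinct class-variables must be linked by a size-$3$ constraint---is false: take $S$ to be the even-parity strings in $\{0,1\}^4$, where every $2$-projection is all of $\{0,1\}^2$, no collapsing occurs, and there are no clauses at all between any pair of classes. More importantly, no such structure is needed. Feder's enumeration algorithm for 2-SAT gives $O(n^2)$ preprocessing and $O(n)$ delay per solution on an instance with $O(n^2)$ clauses and $O(n)$ clauses per variable, so enumerating up to $m+1$ solutions costs $O(n^2 + nm)$, already dominated by the $O(n^2(m+\log n))$ you spent building the instance. Citing this result (or the analogous bound from Creignou--H\'ebrard) in place of the speculative branching argument closes the proof.
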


The proof here is more involved: it first involves reducing to 2-SAT, and then using Feder's \cite{Feder} algorithm for enumerating 2-SAT solutions with polynomial delay as a subroutine. 

\begin{proof}
We first check whether 1 is already the point of perfect reconstruction, by Observation~\ref{obs:perfect1}. If not, then we reduce to the question of enumerating solutions to a 2-SAT instance, which is then solvable by an algorithm of Feder \cite{Feder}. 

Let $S$ be the given set of $m$ strings of length $n$. The 2-SAT instance $\varphi$ will have $n$ variables $x_1, \dotsc, x_n$. For each pair of distinct indices $i,j \in [n]$, we add clauses on the variables $x_i, x_j$ such that the set of assignments to $x_i,x_j$ satisfying those clauses are precisely the pairs of bits occurring in $S|_{\{i,j\}}$. (If $S|_{\{i,j\}}$ includes all 4 possibilities, we do not add any clauses on the pair $x_i, x_j$.) Thus, the set of satisfying assignments to $\varphi$ is identical to the 2-reconstruction of $S$. Building $\varphi$ takes $O(n^2 (m + \log n))$ time (the $\log n$ is because it takes logarithmic time to even write down a single index $i \in [n]$). As $\varphi$ has $O(n^2)$ clauses, and each variable occurs in at most $O(n)$ clauses, Feder's algorithm \cite{Feder} enumerates solutions to $\varphi$ with $O(n^2)$ preprocessing time and $O(n)$ time per solution enumerated. We run Feder's algorithm, keeping a count of how many solutions it has found so far. If and when it enumerates solution number $m+1$, we can stop and report that $S$ is not 2-reconstructible, for the 2-reconstruction necessarily contains a string not in $S$. Conversely, since we know $S$ is contained in its 2-reconstruction, if the enumeration algorithm does not find $\geq m+1$ solutions, then it will enumerate precisely the strings in $S$, and thus $S$ is 2-reconstructible. The total time for this procedure is then $O(n^2 + nm)$, but this is already dwarfed by the time to produce $\varphi$.
\end{proof}

We leave as an open question:

\begin{open} \label{open:smallK}
What is the complexity of deciding whether a given set of strings over an alphabet of fixed size $\geq 3$ is perfectly reconstructible from its 2-way projections? In particular, it is in $\mathsf{P}$? Is it $\mathsf{coNP}$-complete?
\end{open}

As a polynomial-delay enumeration algorithm for binary CSPs (each constraint is on at most 2 variables) over an alphabet of size $a \geq 3$ would be able to solve the $a$-coloring problem on graphs, no such algorithm exists unless $\mathsf{P} = \mathsf{NP}$. Thus, while it is conceivable that Thm.~\ref{thm:2recon} could be extended to larger alphabets, doing so requires a different approach than in our current proof. It is also possible that the problem is $\mathsf{coNP}$-complete, though we ran into difficulties with at least one attempted proof.\footnote{The following natural attempt to show that testing 2-reconstructibility over an alphabet of size $\geq 3$ is $\mathsf{coNP}$-hard runs into an obstacle that we record here. Suppose we wish to reduce from the \textsc{Another Solution to Graph 3-Colorability}: given a graph $G$ and a 3-coloring $c$, decide whether there is another proper coloring of $G$, that does not just differ from $c$ by permuting the colors. As \textsc{Another Solution to 3-SAT} is $\mathsf{ASP}$-complete (see \cite{UedaNagao,YatoSeta}), hence $\mathsf{NP}$-complete, and the standard reduction from 3-SAT to 3-Coloring is nearly parsimonious---it simply multiplies the number of solutions by $3!=6$ because of relabeling the colors---\textsc{Another Solution to Graph 3-Colorability} is also $\mathsf{ASP}$-complete, hence $\mathsf{NP}$-complete. We may try to reduce this to 2-reconstructibility for an alphabet of size 3 as follows: given an $n$-vertex graph $G$ and a proper 3-coloring $c\colon G \to \{r,g,b\}$, we build a set of 6 strings of length $n$ over the alphabet $\{r,g,b\}$, where the indices correspond to the vertices of $G$, and the strings are the 6 colorings gotten from $c$ by permuting the 3 colors in all possible ways. Now, for any 2-window corresponding to an edge of $G$, among the 6 strings we see all possible pairs of distinct colors, giving the correct constraint on those two indices, namely, that they be distinct. However, for two vertices $u,v$ such that $c(u)=c(v)$, all 6 strings will have the the same value in the indices corresponding to $u,v$. Thus the 2-window corresponding to $u,v$ enforces that, in any string in the 2-reconstruction, the colors of $u,v$ must be equal. And thus this set is automatically 2-reconstructible, it does not allow any of the other potential 3-colorings of $G$.}

\section{Overlap graph algorithm} \label{sec:algorithm}
In \S\ref{sec:overlap_intro} we introduced the concept of an overlap graph as adapted to fit our problem. Here we explain our efficient algorithms for constructing and using the overlap graph. 

In order to construct our overlap graph, we must first choose an ordering of our indices. Since we can use all subsequences of length $k$, including those with non-consecutive indices, this allows us to reorder our indices with no loss of information. We discuss our current methods for choosing an ordering in \S\ref{sec:preprocessing}; this is an area where further improvements may be possible.

After choosing our ordering, we will then create rows in our overlap graph using consecutive substrings that overlap by $k-1$. Our first row will contain all substrings from our data set observed at indices 1 through $k$; the second row will contain all substrings at indices 2 through $k+1$ and so on, cycling around, viz. the last row corresponds to indices $\{n, 1, 2, \dotsc, k-1\}$. After populating all rows of our overlap graph we will draw edges from the first row to the second row where indices 2 through $k$ are identical; we will similarly do the same for all adjacent rows in the overlap graph for their $k-1$ length overlaps until we eventually draw edges from the final row of the graph back to the first row. 

To find the edges between row $i$ and $i+1$: for each $k$-mer in row $i$, we take its length-$(k-1)$ suffix $s$, and then check whether $s0$ (resp., $s1$) are present in row $i+1$. If the maximum number of $k$-mers in a row is $t$, this takes at most $O(t \log t)$ time if the strings in a given row are stored in sorted order, so that they can be found by binary search. With $m$ strings of length $n$, the overall time to construct this graph is thus $O(nmk + t \log t) \leq O(nmk + k 2^k)$. 

The strings that are consistent with these $n$ windows of size $k$ are precisely the length-$n$ cycles in the overlap graph, so our next step is to compute the set of all such cycles. Because of the layered structure of our overlap graph, each such cycle must contain exactly one node from each row, which we take advantage of in order to find all such cycles quickly. 
Finding these cycles can be done using modifications of breadth-first or depth-first search; one simple way to do so is, for each vertex $v$ in the first layer, use BFS to depth $n$ to find all directed paths of length exactly $n$ back to $v$. 

After we have reconstructed these strings, we separate out the strings that were not in our initial data set and use a \textsc{Hitting Set} subroutine to determine which of them are in the $k$-reconstruction and, therefore, whether or not the Point of Perfect Reconstruction is at most $k$. For our experimental evaluation, even with a brute force \textsc{Hitting Set} algorithm our overlap graph method demonstrates significant speedup over a simply greedy method once $n$ is large enough (roughly $n \geq 8$).

\subsection{Ordering the columns} \label{sec:preprocessing}

Since we have access to all subsequences, including those without consecutive indices, we can pick any column ordering to create an overlap graph. As such, we can perform a preprocessing step to reorganize the columns in an attempt to make our overlap graph more efficient.

In order to optimize our overlap graph, we want to minimize the size and minimize the number of cycles. A strategy towards both is to arrange the matrix so that the most similar columns are adjacent to one another. By placing similar columns next to each other we minimize the number of substrings, which in turn will minimize the size of the overlap graph and therefore the number of cycles. 
    
While there are several methods to accomplish this goal efficiently, the one we have chosen both for simplicity and efficiency is this: rewrite all binary strings in a $\{-1, 1\}$ basis, and to then calculate the dot products of each pair of columns. However, note that if we had a column of all 1's and a column of all -1's, we would similarly want them to be beside each other; in other words, it does not matter if the agreements between two columns are $-1=1$ or $1=1$, so long as they are consistent. As such, we will take the absolute value of these dot products. 
    
We can calculate all such values for each pairwise combination of columns and store these values in an $n \times n$ matrix. We then permute this matrix such that the sum one value off the diagonal is maximized using a simple greedy algorithm. These values correspond with the dot products between column $i$ and column $i+1$, which we will then use to define the index ordering of our strings for the overlap graph calculation. 

\subsection{Observations on overlap graphs}

There are several additional features of our overlap graph that we can take advantage of for further speed-ups in practice.

Every node in the overlap graph must be from an original string, so it must belong to at least one $n$-cycle. Furthermore, if it belongs to only one $n$-cycle, that cycle must have been one of our original strings. In that case, if what we're interested in is the point of perfect reconstruction, we can remove that node from the overlap graph before looking for more cycles, since that node does not contribute to any cycles that correspond to strings in the $k$-reconstruction that aren't in the original input. Note that we can determine the number of cycles a node belongs to much faster than we can list all such cycles: if $A$ is the adjacency matrix of the overlap graph, then $(A^n)_{vv}$ is the number of cycles through vertex $v$. And $A^n$ can be calculated by repeated squaring using only $O(\log n)$ matrix products.

Note that, since each layer consists of $k$-mers, it has at most $2^k$ nodes, and there are exactly $n$ layers. Furthermore, since every node has out-degree either 1 or 2, we get $|V| \leq n 2^k$ and $|V| \leq |E| \leq 2|V|$, so the graph is quite sparse. 

Fig.~\ref{fig:OLGtraits} gives an example of an overlap graph, its adjacency matrix $A$, and $A^n$. As can be seen in the figure, the layered structure of the overlap graph corresponds to its adjacency matrix consisting of precisely $n$ blocks, each of which is at most $2^k \times 2^k$, and where the blocks appear in a cyclic permutation pattern. Thus, even when we take powers of $A$---which need no longer be so sparse---it still consists of precisely $n$ such blocks of size at most $2^k \times 2^k$. So every power of $A$ has at most $4^k n$ nonzero entries, but $A$ could be as big as $2^k n \times 2^k n$, so the number of nonzero entries in any power of $A$ is always a factor of $\leq 1/n$ times the total number of entries in the matrix. The known block structure can also be used directly to compute the powers of $A$.  

Another valuable observation is that in the case $k=n-1$, we use all possible subsequences in the overlap graph and therefore use all information available to us. In this case, the strings produced by our overlap graph precisely correspond with the strings produced by the $k$-reconstruction of our data set, and no additional algorithmic work is required. Therefore, in this particular case, the \textsc{Perfect Reconstruction} problem is in $\mathsf{P}$. Note that this result does not even extend to $k=n-2$, for already in that case some windows are omitted from the overlap graph, and our strategy will still need to use \textsc{Hitting Set} on any strings that aren't in the original data set to determine whether or not they are in the $k$-reconstruction.

\begin{figure}
    \centering
    \begin{subfigure}[b]{0.49\textwidth}
    \pdftooltip{\includegraphics[width=\textwidth]{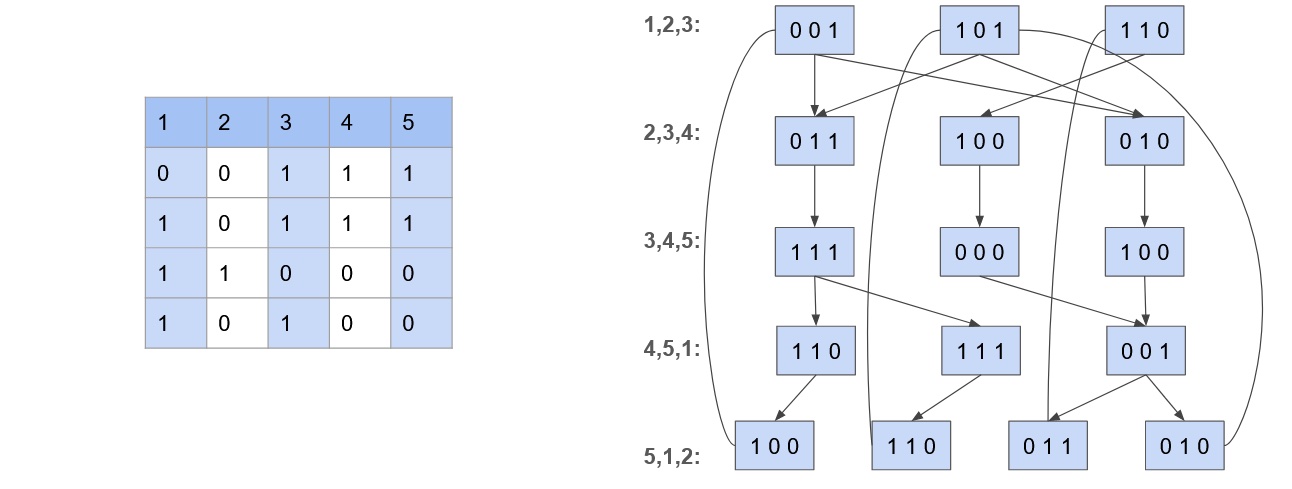}}{A table displaying the four strings of length 5 as its rows: $00111$, $10111$, $11000$, $10100$, next to the corresponding overlap graph. The overlap graph has three nodes in its first four layers, and four nodes in its final layer. Edges go between layers, including looping up from the final layer back to the first layer.}
    \caption{A data set consisting of 4 strings each of length 5 with the indices enumerated 1--5, and its corresponding overlap graph at $k=3$.}
    \end{subfigure}
\begin{subfigure}[b]{0.49\textwidth}
        \pdftooltip{\includegraphics[width=\textwidth]{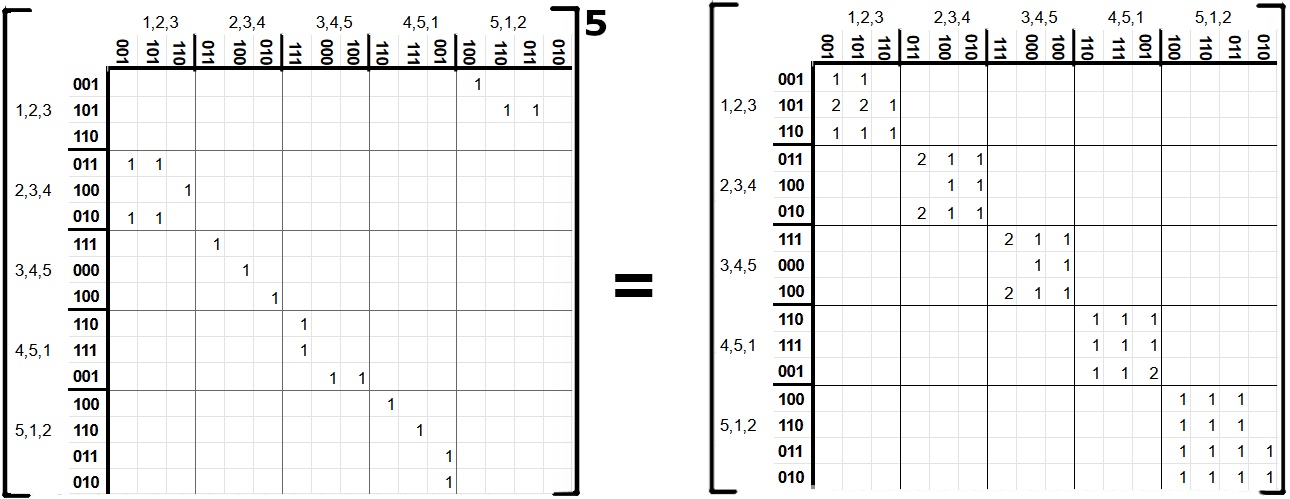}}{The adjacency matrix of the overlap graph from the other half of the figure. It is a 19-by-19 matrix, arranged into block sub-matrices corresponding to the layers of the overlap graph. The nonzero blocks are those just below the diagonal blocks. Since the overlap graph has $5$ layers, when raised to the fifth power, we get a block-diagonal matrix with five blocks.}
        \caption{The adjacency matrix $A$ of the overlap graph. Nodes are labeled by substrings; blocks are labeled by the indices they originated from in our data set. }
    \end{subfigure}

    \caption{An example data set, its overlap graph, and its adjacency matrix $A$. The diagonal entries of the matrix $A^5$ count the number of 5-cycles through each vertex. In the lower-right block we can see that each vertex in that layer has exactly 1 cycle through it, so those cycles must precisely be the original strings we started with. Thus in this case the diagonal entries of $A^n$ imply that this set of strings is perfectly reconstructed at $k=3$.     \label{fig:OLGtraits}}
\end{figure}

\section{Experimental evaluation}
We ran a series of experiments to calculate the $k$-reconstruction of a set, comparing our overlap graph algorithm (including all needed calls to \textsc{Hitting Set}, which we solve using a simple brute-force algorithm) against a simple greedy algorithm. The greedy algorithm we consider builds up the strings starting from a size-$k$ window, ensuring compatibility with the data set each time it extends the strings by a single index. Although simple, this is still significantly faster than brute-forcing over all $2^n$ strings; see App.~\ref{app:greedy} for details of the greedy algorithm. 

All experiments were conducted using C++ on a machine with 32.0 GB of RAM and a 12th Gen Intel(R) Core(TM) i7-12700K processor, compiled using \texttt{g++} and with the C++17 standard (\texttt{-std=c++17}) and no additional optimization options. Each experiment began by randomly generating a data set and then testing the time to completion for our overlap graph-based algorithm, and then the time to completion for the greedy algorithm. Each combination of parameters $m$, $n$, and $k$ shown were evaluated over 30 trials, except for those with $m \geq 500$, for which we evaluated over 10 trials. The code we used, as well as the randomly generated data sets that were generated, can be found at \cite{code}.

Although exhaustive exploration of all combinations of $(n,m,k)$ was infeasible for sufficiently large $n$, we performed experiments showing the scaling as we fix two of the parameters and let the third vary. At $n=12$, we cover a large portion of entire parameter region of $(m,k)$: we did experiments for all $2 \leq k \leq n$ and $m$ up to $2000$ (the maximum at $n=12$ being $4096$), giving us a good overall picture of the performance of the two algorithms. In the rest of the section we highlight this data and give theoretical explanations that match the experimentally observed runtime behavior.

We start with the behavior as the string length $n$ grows: in Fig.~\ref{fig:compareN} we see that our overlap graph algorithm appears exponentially better than the greedy algorithm as a function of $n$, the length of the strings. In Fig.~\ref{fig:big} we see that our algorithm out-performs the greedy algorithm for most values of $(m,k)$, with very small $(m,k)$ being some of the exceptions. Note that in the latter figure we have $n=12$, we look at all values of $k$, and $m$ up to $2000 \approx 2^n / 2$, giving a fairly comprehensive comparison of the two algorithms.

\begin{figure}[!htbp]
    \begin{subfigure}[b]{0.49\textwidth}
    \pdftooltip{\includegraphics[width=\textwidth]{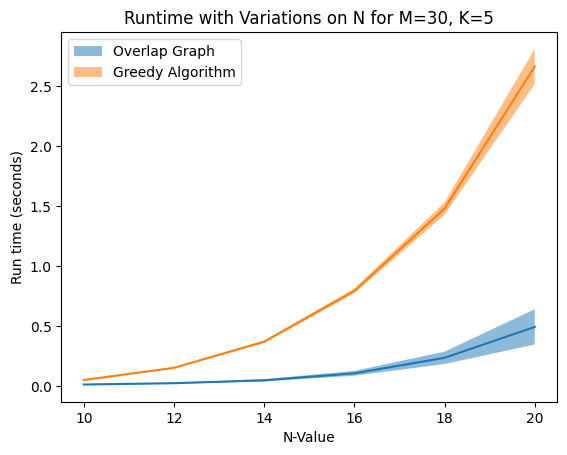}}{A plot showing the runtime of our overlap graph method and greedy method, as a function of the string length $n$. Both display  exponential growth, with the greedy algorithm appearing to grow exponentially with a larger base of the exponent. The overlap graph runtimes range from 0.011s at $n=10$ to 0.49s at $n=20$, while the greedy algorithm runtimes range from 0.049s at $n=10$ to 2.66s at $n=20$. See Appendix~\ref{app:data} for the full data table and our git repo for the data in CSV format.}
    \caption{At $m=30$ strings and window size $k=5$, our overlap graph algorithm is faster than the greedy algorithm for all lengths $n \geq 10$.    \label{fig:compareN}
}
    \end{subfigure}
    \begin{subfigure}[b]{0.49\textwidth}
    \pdftooltip{\includegraphics[width=\textwidth]{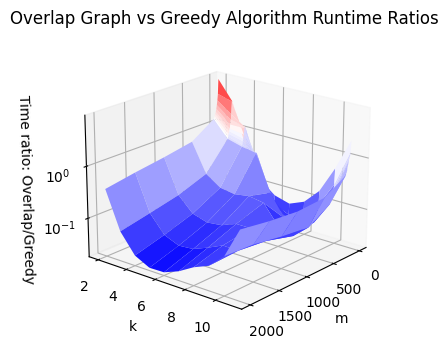}}{A 3D surface plot of the ratio in runtimes between the overlap graph algorithm and greedy algorithm for $n=12$, with the base axes being $k$ (ranging from $2$ to $11=n-1$) and $m$ ranging from $0$ to $2000$. When $k \leq 4$ and $m \leq 100$ there is a small region where the greedy algorithm is faster by a factor less than $10$, and the rest appears to be concave up but well below a factor of 1, meaning that the overlap graph algorithm is faster. See Appendix~\ref{app:data} for the full data table and our git repo for the data in CSV format.}
        \caption{At $n=12$, our algorithm is faster for all but a very small region of $(m,k)$ parameter values. See \cite{code} for an animated version for better viewing.}
    \label{fig:big}
    \end{subfigure}
\end{figure}

In Fig.~\ref{fig:zoom}(a) we zoom in on the region $m \leq 100$; comparing with Fig~\ref{fig:zoom}(b) we see that the region where the greedy algorithm is faster is contained within the region where $k$ is at most (or very close to) the point of no information. In random sets of strings, by the coupon collector problem, this region (in Fig~\ref{fig:zoom}(b)) is where $m \geq k2^k$. To see that it really is how close we are to the point of no information that is governing when the greedy algorithm is faster for small values of $k$, at $n=12$, we explored a 2D region of $(m,k)$ parameter space that includes all values of $k$ (Fig.~\ref{fig:zoom}(b)). 

\begin{figure}[!htbp]
    \centering
        \begin{subfigure}[b]{0.49\textwidth}
    \pdftooltip{\includegraphics[width=\textwidth]{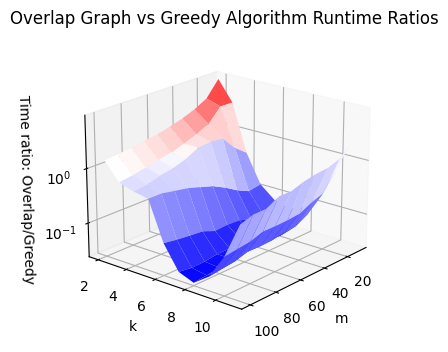}}{A 3D surface plot of the ratio in runtimes between the overlap graph algorithm and greedy algorithm for $n=12$, with the base axes being $k$ (ranging from $2$ to $11=n-1$) and $m$ ranging from $0$ to $100$. When $k \leq 4$ there is a strip in the plot where the greedy algorithm is faster by a factor less than $10$, and the rest appears to be concave up but well below a factor of 1, meaning that the overlap graph algorithm is faster. See Appendix~\ref{app:data} for the full data table and our git repo for the data in CSV format.}
    \caption{Ratio of runtimes between our overlap graph algorithm and the greedy algorithm at $n=12$, for $m \leq 100$ and all $k$.}
    \end{subfigure}
    \begin{subfigure}[b]{0.49\textwidth}
\pdftooltip{\includegraphics[width=\textwidth]{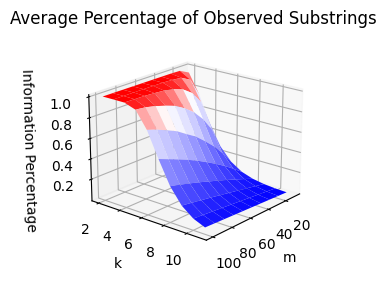}}{A 3D surface plot of the fraction of strings seen in each $k$-window. In the region $m \gtrsim k 2^k$ the fraction is very close to one, and smoothly decreases to $0$ as one moves away from that region in the direction of increasing $k$. The region in part (a) of the figure where the greedy algorithm is faster is contained in the region of this plot where the fraction is close to 1.}
    \caption{Average fraction of $2^k$ subsequences seen in each $k$-window. (1.0 is the same as having no information, $\Recon_k(S) = \{0,1\}^n$.)}
    \end{subfigure}

    \caption{Zoomed in version of Fig.~\ref{fig:big}, focusing on $m \leq 100$. At small $k$ and for some $m \gtrsim k 2^k$, the greedy algorithm outperforms our overlap graph algorithm because $k$ is below or close to the point of no information.}
    \label{fig:zoom}
\end{figure}

However, the region where the greedy algorithm is faster is only a subset of the region of no information; we now explore this phenomenon in more depth. At $n=12$ (as in the previous figures), at small values of $k$ we see that there is a change between, say, $m=40$ where the greedy algorithm is faster up to $k=4$, while once   $m \geq 90$, the greedy algorithm is only faster at $k=2$. In Fig.~\ref{fig:compareK}(a) and (b) we look at these two slices of the 3D picture (that is, fix $n=12$ and $m$ either 40 or 90, and look at the variation with $k$). 
On those log-plots, we see that both algorithms' runtimes peak at some relatively small value of $k$; after that, the overlap graph algorithm's runtime continues to decrease with increasing $k$, while that of the greedy algorithm seems to follow a downward quadratic peaking at $k=7=(n/2)+1$, corresponding to the binomial coefficient $\binom{n}{k-1}$. The following lemma partially explains the presence of the binomial coefficient $\binom{n}{k-1}$: 

\begin{figure}[!htbp]
    \centering
    \begin{subfigure}[b]{0.49\textwidth}
    \pdftooltip{\includegraphics[width=\textwidth]{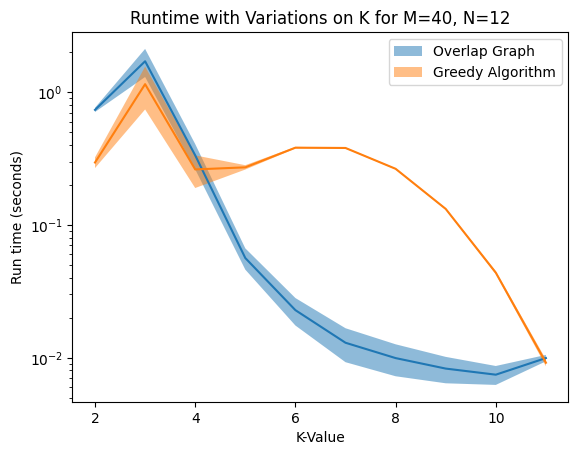}}{A log-plot of runtimes for $n=12, m=90$ as $k$ varies. At $k \leq 4$, both algorithms peak at $k=3$ with runtimes close to one another, though the greedy algorithm slightly faster. As $k$ increases, the overlap graph continues to decrease until $k=11$ when it increases slightly. In contrast, the greedy algorithm takes significantly more time, which, for $k \geq 4$, displays a negative quadratic behavior that peaks around $k=7$ (corresponding to the Gaussian approximation to the binomial coefficient). See Appendix~\ref{app:data} for the full data table and our git repo for the data in CSV format.}
    \end{subfigure}
\begin{subfigure}[b]{0.49\textwidth}
    \pdftooltip{\includegraphics[width=\textwidth]{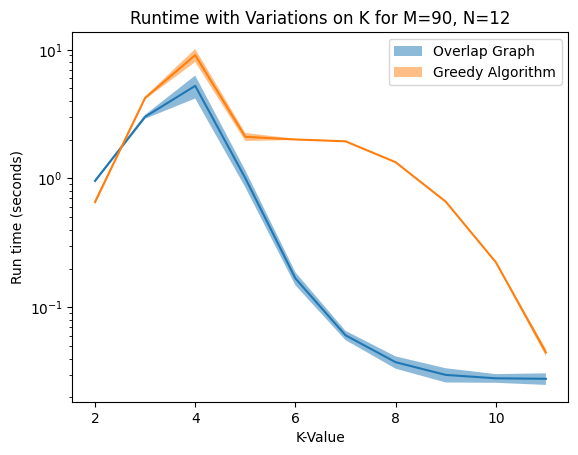}}{A log-plot of runtimes for $n=12, m=90$ as $k$ varies. Very similar to the $m=40$ plot, except for $k \geq 3$ our overlap graph algorithm is faster. At $k \leq 4$, both algorithms peak at $k=3$ with runtimes close to one another. As $k$ increases, the overlap graph continues to decrease until $k=11$ when it increases slightly. In contrast, the greedy algorithm takes significantly more time, which, for $k \geq 4$, displays a negative quadratic behavior that peaks around $k=7$ (corresponding to the Gaussian approximation to the binomial coefficient). See Appendix~\ref{app:data} for the full data table and our git repo for the data in CSV format.}
    \end{subfigure}

    \begin{subfigure}[b]{0.49\textwidth}
    \pdftooltip{\includegraphics[width=\textwidth]{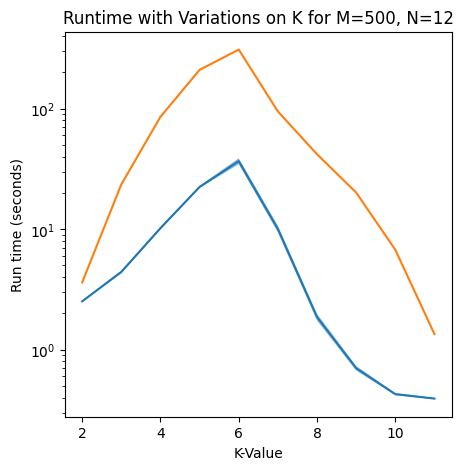}}{A log-plot of runtimes for $n=12, m=500$ as $k$ varies. Both algorithm display more of a ``mountain peak'' shape with some slight curvature, but a sharp peak at $k=6$, with the overlap graph algorithm consistently faster by almost a factor of $10$. See Appendix~\ref{app:data} for the full data table and our git repo for the data in CSV format.}
    \end{subfigure}
\begin{subfigure}[b]{0.49\textwidth}
    \pdftooltip{\includegraphics[width=\textwidth]{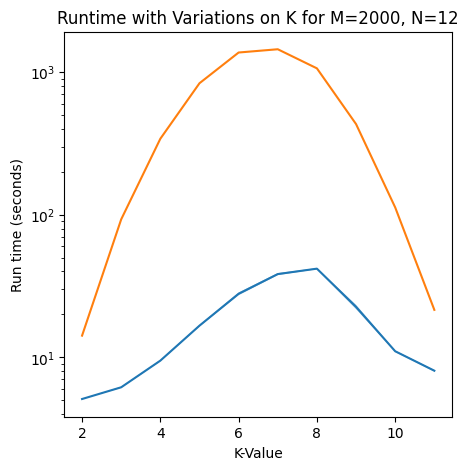}}{A log-plot of runtimes for $n=12, m=2000$ as $k$ varies. The greedy algorithm displays a negative quadratic-looking behavior (seemingly corresponding to the binomial coefficient), peaking at $k=7$ with runtime $> 1000s$. The overlap graph algorithm also displays a single-peaked behavior, but much lower (maximum less than $100s$ at $k=8$) and with peak slightly skewed to the right compared to the greedy algorithm runtime. See Appendix~\ref{app:data} for the full data table and our git repo for the data in CSV format.}
    \end{subfigure}
    
    \caption{Comparing runtime of our overlap graph algorithm with the greedy algorithm on random data sets of length-12 strings ($n=12$), for all values of the window size $k$ (between $2$ and $11=n-1$), at (top left) $m=40$ strings, (top right) $m=90$, (bottom left) $m=500$, (bottom right) $m=2000$.}
    \label{fig:compareK}
\end{figure}

\begin{lemma} \label{lem:greedy}
When $k$ is at most the point of no information (i.e., $\Recon_k(S) = \{0,1\}^n$), the greedy algorithm has runtime $\Theta(\binom{n}{k-1}2^n)$.
\end{lemma}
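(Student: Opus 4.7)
The plan is to analyze the greedy algorithm stage by stage, where stage $\ell$ handles extending every length-$\ell$ partial string to length $\ell+1$. The first observation is that $\Recon_k(S) = \{0,1\}^n$ is equivalent to every one of the $\binom{n}{k}$ many $k$-way projections of $S$ equaling all of $\{0,1\}^k$, so \emph{no} compatibility check ever fails and no candidate is ever pruned. Starting from the $2^k$ candidates at length $k$, the population thus doubles at each stage, giving exactly $2^\ell$ partial strings at length $\ell$.

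Next I would count the per-stage work. Extending from length $\ell$ to $\ell+1$, the algorithm considers both bits for each of the $2^\ell$ candidates, and checks each trial against the $k$-windows containing the new position $\ell+1$---of which there are exactly $\binom{\ell}{k-1}$ (choose the remaining $k-1$ positions among $\{1,\ldots,\ell\}$). Storing the projections in a hash table keyed by (index subset, $k$-tuple) makes each lookup $O(1)$, with preprocessing cost dominated by the main sum, so stage $\ell$ does $\Theta(2^{\ell+1} \binom{\ell}{k-1})$ work. The total cost is
\[
\sum_{\ell=k}^{n-1} \Theta\!\left( 2^{\ell+1} \binom{\ell}{k-1} \right),
\]
whose consecutive-term ratio $2(\ell+1)/(\ell-k+2) \geq 2$ makes the sum dominated by its last term $\Theta(2^n \binom{n-1}{k-1})$, and stage $\ell = n-1$ alone furnishes the matching lower bound.

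Finally, $\binom{n-1}{k-1}$ and $\binom{n}{k-1}$ differ by the factor $n/(n-k+1)$, which is $\Theta(1)$ in the regime of interest: the no-information hypothesis forces $m \geq 2^k$, so $k = O(\log m) \ll n$ for any non-trivial instance. The main subtlety I expect to navigate is the $O(1)$-per-check assumption; if one charges $O(k)$ per lookup (naive comparison of $k$-tuples), the bound picks up a factor of $k$, but the expression $\Theta(\binom{n}{k-1} 2^n)$ is recovered cleanly by defining ``runtime'' as the number of (candidate, window) pairs examined, which is the natural unit of work here.
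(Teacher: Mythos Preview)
Your argument mirrors the paper's: both count stage by stage, observe that the no-information hypothesis means nothing is ever pruned (so exactly $2^\ell$ candidates at length $\ell$), and bound $\sum_\ell 2^{\ell+1}\binom{\ell}{k-1}$ by its final term. The one soft spot is your closing step. You (correctly) stop the sum at $\ell=n-1$ and obtain $\Theta\bigl(2^n\binom{n-1}{k-1}\bigr)$, then argue $\binom{n-1}{k-1}=\Theta\bigl(\binom{n}{k-1}\bigr)$ via ``$k=O(\log m)\ll n$.'' But the no-information hypothesis only yields $k\le\log_2 m$, and $m$ may itself be $\Theta(2^n)$: the even-parity code has point of no information $n-1$, and there $\binom{n}{k-1}/\binom{n-1}{k-1}=n/(n-k+1)=n/2$ is unbounded. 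So ``$k\ll n$ for any non-trivial instance'' is not actually forced by the hypothesis. The paper sidesteps this by running its sum one index further, to $j=n$, so that the dominant term is $2^n\binom{n}{k-1}$ on the nose; the discrepancy only matters when $k=n-o(n)$, which is outside the regime where the lemma is applied.
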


In fact, as we'll see in the proof, our lower and upper bounds on the runtime are within a factor of 2 of each other.

\begin{proof}
If $k$ is at most the point of no information, then $m \geq 2^k$, and the runtime can be estimated as follows. In the first window, the greedy algorithm sees all $2^k$ possible strings. For each of those and each possible extension, $2^{k+1}$ in all, it must check $\binom{k}{k-1}$ windows. Since $\Recon_k(S) = \{0,1\}^n$, at the $i$-th stage, it will have found $2^{k+i}$ strings, and must check $2^{k+i+1}$ strings against $\binom{k+i}{k-1}$ windows. For a total runtime governed primarily by: 
\begin{align*}
\sum_{i=0}^{n-k} 2^{k+i+1} \cdot \binom{k+i}{k-1} = \sum_{j=k}^n 2^{j} \binom{j}{k-1}
\end{align*}
As $j \leq n$ for all terms in the latter sum, we can bound the sum by 
\[
\leq \binom{n}{k-1} \sum_{j=k}^n 2^j = \binom{n}{k-1}(2^{n+1}-2^{k+1}) \le \binom{n}{k-1}2^{n+1}.
\]
This is within a factor of 2 of the exact value, since the largest single summand is already $\binom{n}{k-1}2^n$.
\end{proof}

Lemma~\ref{lem:greedy} and Fig.~\ref{fig:compareK} suggest examining what these plots look like when normalized by $\binom{n}{k-1}$; see Fig.~\ref{fig:compareKNormalized}. In that figure we have also plotted the number of extra reconstructed strings (beyond the original data set $S$), so that one may easily tell where the points of no information and perfect information are: no information is when the number of extra reconstructed strings is maximal, and perfect information is when it is 0. As we see in Fig.~\ref{fig:compareKNormalized}, after dividing out by $\binom{n}{k-1}$, much of the behavior of the runtimes is governed by the number of extra reconstructed strings. At $m=2000$, we see what appears to be behavior \emph{very} closely governed by $\binom{n}{k-1}$ in Fig.~\ref{fig:compareK}(d); when normalizing by $\binom{n}{k-1}$ (Fig.~\ref{fig:compareKNormalized}(d)) we see there is still quite a bit of variation, but that variation is mostly within a multiplicative factor of 2, which would only appear as a tiny additive correction on the log-plot of Fig.~\ref{fig:compareK}(d).

\begin{figure}
    \centering
    \begin{subfigure}[b]{0.49\textwidth}
    \pdftooltip{\includegraphics[width=\textwidth]{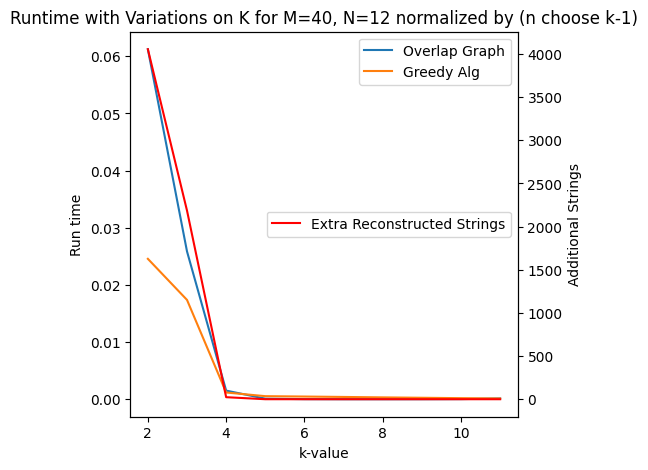}}{A plot for $n=12, m=40$ as $k$ varies, with runtimes divided by $n$-choose-$(k-1)$. The number of extra reconstructed strings starts at 4096 at $k=2$ (the point of no information), then drops sharply linear until $k=4$ where it is almost 0, and at $k=5$ it reaches 0, the point of perfect information. The overlap graph algorithm is faster when $k \geq 4$, but both algorithms' normalized runtimes follow a trend very similar to that of the extra reconstructed strings. The normalized runtime of the overlap graph is almost indistinguishable (up to scale) from the number of extra reconstructed strings.  See Appendix~\ref{app:data} for the full data table and our git repo for the data in CSV format.}
    \end{subfigure}
\begin{subfigure}[b]{0.49\textwidth}
    \pdftooltip{\includegraphics[width=\textwidth]{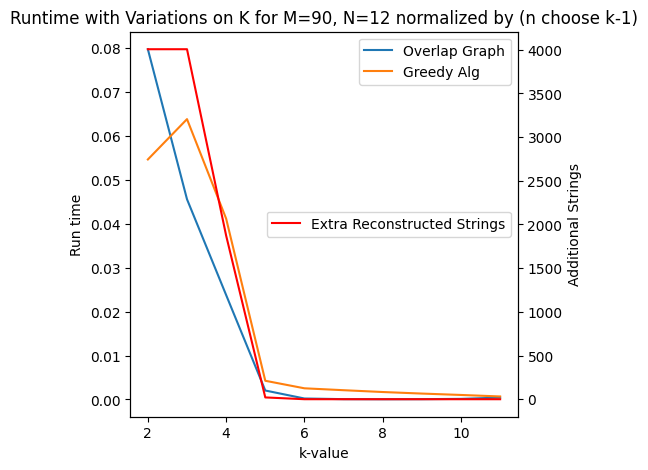}}{A plot for $n=12, m=90$ as $k$ varies, with runtimes divided by $n$-choose-$(k-1)$. The number of extra reconstructed strings starts at 4096 at $k=2$ and $3$ (the point of no information), then drops sharply linear until $k=5$ where it reaches 0, so is at the point of perfect reconstruction. The overlap graph algorithm is faster when $k \geq 3$, but both algorithms' normalized runtimes follow a trend very similar to that of the extra reconstructed strings. The greedy algorithm runtime for $k=3$ is larger than at $k=2$, but then drops off rapdily as the number of extra reconstructed strings drops. See Appendix~\ref{app:data} for the full data table and our git repo for the data in CSV format.}
    \end{subfigure}

    \begin{subfigure}[b]{0.49\textwidth}
    \pdftooltip{\includegraphics[width=\textwidth]{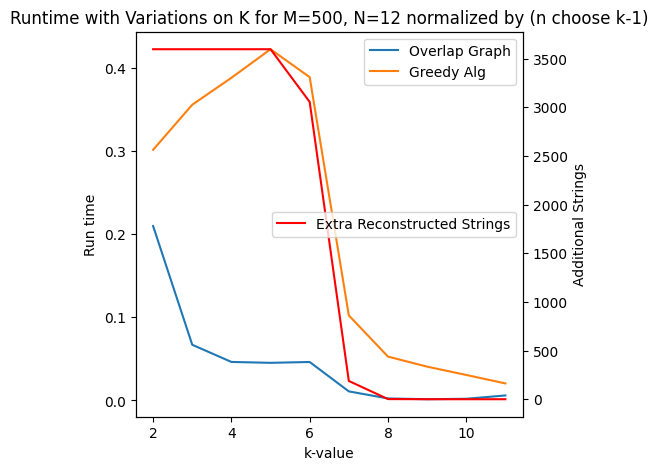}}{A plot for $n=12, m=500$ as $k$ varies, with runtimes divided by $n$-choose-$(k-1)$. The number of extra reconstructed strings starts at 4096 at $k=2$ through $5$ (the point of no information), then drops sharply until $k=8$ where it reaches 0, so is at the point of perfect reconstruction. The overlap graph algorithm is much faster than the greedy algorithm. The greedy algorithm's runtime initially increases for $k=2$ to $5$, then decreases sharply, lining up very closely with the trend in the number of extra reconstructed strings. See Appendix~\ref{app:data} for the full data table and our git repo for the data in CSV format.}
    \end{subfigure}
    \begin{subfigure}[b]{0.49\textwidth}

    \pdftooltip{\includegraphics[width=\textwidth]{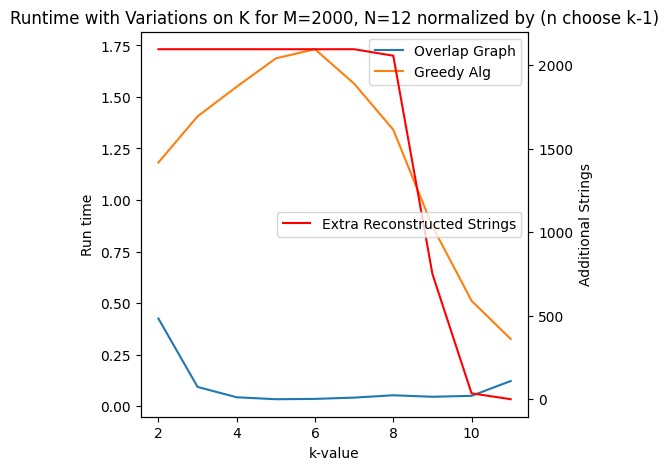}}{A plot for $n=12, m=2000$ as $k$ varies, with runtimes divided by $n$-choose-$(k-1)$. The number of extra reconstructed strings starts at 4096 at $k=2$ through $7$ (the point of no information), then drops slightly at $k=8$, sharply through $k=10$, then very slightly to the point of perfect information at $k=11$The overlap graph algorithm is much faster than the greedy algorithm. The greedy algorithm's runtime initially increases for $k=2$ to $6$, then decreases sharply, at a similar but slightly less steep trend than the number of extra reconstructed strings. See Appendix~\ref{app:data} for the full data table and our git repo for the data in CSV format.}
    \end{subfigure}
    
    \caption{Comparing the runtimes of the greedy algorithm and overlap graph algorithm, normalized by $\binom{n}{k-1}$, at $n=12$ and all values of $k$. We simultaneously plot the number of extra reconstructed strings (beyond those in $S$); one sees that to some extent both algorithms---but as $m$ gets large much moreso the greedy algorithm---are heavily governed by the latter parameter. From left to right, top to bottom: $m=40,90,500,2000$.}
    \label{fig:compareKNormalized}
\end{figure}

\section{Conclusion and future directions}
In this paper we introduced a number of computational questions in the study of higher-order interactions, proposed a new algorithm that is efficient in practice for these problems, and analyzed its runtime both analytically and experimentally. In addition to open questions posed within our work (e.g. Open Questions~\ref{open:perfect} and \ref{open:smallK}), these tools open up many further questions for exploration.

We have come across several variations of these questions in our research, not yet explored. For example, what if we allow incomplete strings in our data set? Perhaps the data is meant to represent a survey response where a participant did not answer every question, or a system only part of which could be observed. In the case of strings with missing characters, what even is a good definition of ``perfect reconstruction''? Can our algorithms be adapted to  this setting?

Larger alphabet sizes are likely to come up in applications, such as survey response data with numerous options, or modeling disease transmission with multiple states (infected, recovered, asymptomatic). The basic idea of the overlap graph and many of the results in this paper still hold for arbitrary alphabet sizes, but the complexity increases with the alphabet size. Can we do better?

Another direction we find particularly promising is leveraging sparsity in the data, since most real data sets will be quite sparse. In \S\ref{app:sparse} we show how sparsity can have an immediate impact on some of the questions we study here, and it would be interesting to take further advantage of sparsity.

\section*{Acknowledgments}
JAG thanks Yoav Kallus for many discussions over many months around 2015 (sic!) that included Definition~\ref{def:recon}. 
The authors would like to thank Jessica Flack for inspiring conversations that eventually led to this project. JAG would like to thank Gülce Kardeş for teaching him about $k$-limits in \cite{HJP,GRSS}. 
Both authors' work was funded in part by NSF CAREER award CCF-2047756, and E.~T. was additionally funded by funds from the CU Boulder Interdisciplinary Quantitative Biology program.

 \bibliographystyle{alphaurl}
 \bibliography{citations}

\newpage

\appendix

\section{Details of the greedy algorithm for comparison} \label{app:greedy}
The greedy algorithm we compared against works as follows.

The greedy algorithm builds a list of partial reconstructed strings, that are consistent with all $\binom{i}{k}$ $k$-windows in the first $i$ indices $\{0,\dotsc,i-1\}$, for $i=k,\dotsc,n$, by induction on $i$. When $i=k$, we start by taking the substrings from the window with indices $0 \dots k-1$. Once we have the list of partial strings up to $i$, we then attempt to add 0 onto the end of each string, checking first if there are any disagreements among the windows with some subset of $k-1$ indices from the list $0 \dots i-1$ as well as the new index position, $i$. If there are no disagreements, we add the new string that had a 0 appended. We then do the same by attempting to append 1 instead of 0. We continue adding one index at a time in this style, looking only at windows that contain our new index $i$ and one of the $i - 1 \choose k - 1$ many $(k-1)$-subsets of $\{0,\dotsc,i-1\}$ at each step. 

In pseudocode (pseudo-C++17), the greedy algorithm we use is:

\begin{verbatim}
greedyApproach(vector<string> data, int n, int m, int k)
{
  vector<string> reconstructedData;

  // store the values for the first window of size k
  for (int i = 0; i < m; i++) {
    string sub = data.at(i).substr(0, k);
    if (sub not in reconstructedData) {
      reconstructedData.push_back(sub);    
    }
  }

  // for indices k to n-1... (each index to add)
  for (int i = k; i <= n - 1; i++) {
    vector<string> workingSet;
    // for each of the current strings in the reconstructed data, consider adding the next index
    for (int j = 0; j < reconstructedData.size(); j++) {
      // check against all windows with i and some subset of previous indices
      extendZero = reconstructedData.at(j) + "0";
      extendOne = reconstructedData.at(j) + "1";
      if (greedyCheck(data, i, k, extendZero)) {
        workingSet.push_back(extendZero);
      }
      if (greedyCheck(data, i, k, extendOne)) {
        workingSet.push_back(extendOne);
      }
      workingSet.eraseAt(j); // remove old string      
    }
    reconstructedData = workingSet;
  }
}
\end{verbatim}

\section{Additional details on sparse data sets}\label{app:sparse}
One direction we think is particularly interesting based on real-world data sets is the sparse setting: where, say, there are \emph{many} more 0s than 1s in the strings in our data set (or vice versa). When our data set is particularly sparse, the values for the Point of No Reconstruction and the Point of Perfect Reconstruction are heavily affected by the ratio of 0's to 1's. 

Consider the example of the standard basis vectors $e_1, \dotsc, e_n$, $e_i = (0,0,\dotsc,1,\dotsc,0)$: the all-0 string is reconstructed because every $(n-1)$-way projection contained $0^{n-1}$ as a substring.
This same phenomenon will occur in any data set where $0^k$ is present in all $k$-way projections. If every column has at least a $(k-1)/k$ fraction of 0s, then by a pigeonhole argument, every $k$-way projection must contain $0^k$, so $0^n$ is in $\Recon_k(S)$.

This logic can be further generalized to a blend of sparse and dense data columns; consider a data set where all columns have majority 0's except one, which has majority 1's. This data set would have to produce the all zero string with a 1 in the index value associated with the column that had majority 1's. 

This can ultimately be generalized as: take the count of minority digits, $d$ to total digits in a column as $\frac{d}{m}$ and sort these ratios in decreasing order such that $\frac{d_1}{m} \geq \frac{d_2}{m} \geq \dots \geq \frac{d_n}{m}$. We can then sum the fractions until we reach the maximum $i$ for which $\sum_{i} \frac{d_i}{m} < 1$; this value of $i$ is a bounds on $k$ for the Point of Perfect Reconstruction, and the string that would be incorrectly reconstructed is simply the most common digit for each index. 

Given the constraints imposed by having very unbalanced columns, can we take advantage of that in our algorithms?

\section{Data for experimental outcomes} \label{app:data}
\small

This data can be found in comma-separated value plaintext at \cite{code}.

\begin{longtable}{|c|c|c|c|c|c|c|}
\caption{Runtime data of overlap and greedy algorithms.}
\label{tab:MK}\\
\hline
$n$  & $m$  & $k$  & \parbox{1in}{Overlap graph \\ runtime avg(s)} & \parbox{1in}{Greedy \\ runtime avg(s)} & \parbox{1in}{Overlap graph \\ runtime stdev(s)} & \parbox{1in}{Greedy \\ runtime stdev(s)} \\ \hline
10 & 30 & 2 & 0.118809 & 0.044797 & 0.006029 & 0.006029 \\
10 & 30 & 3 & 0.155787 & 0.103503 & 0.040238 & 0.040238 \\
10 & 30 & 4 & 0.033638 & 0.042879 & 0.008681 & 0.008681 \\
10 & 30 & 5 & 0.011469 & 0.049065 & 0.003399 & 0.003399 \\
10 & 30 & 6 & 0.005732 & 0.049671 & 0.001396 & 0.001396 \\
10 & 30 & 7 & 0.004646 & 0.033937 & 0.001058 & 0.001058 \\
10 & 30 & 8 & 0.004949 & 0.015121 & 0.000747 & 0.000747 \\
10 & 30 & 9 & 0.005935 & 0.005580 & 0.000214 & 0.000214 \\
12 & 10 & 2 & 0.120791 & 0.014946 & 0.038448 & 0.038448 \\
12 & 10 & 3 & 0.013666 & 0.004076 & 0.005123 & 0.005123 \\
12 & 10 & 4 & 0.003749 & 0.008665 & 0.001253 & 0.001253 \\
12 & 10 & 5 & 0.001958 & 0.015623 & 0.000263 & 0.000263 \\
12 & 10 & 6 & 0.001643 & 0.022287 & 0.000341 & 0.000341 \\
12 & 10 & 7 & 0.001405 & 0.023092 & 0.000259 & 0.000259 \\
12 & 10 & 8 & 0.001291 & 0.015935 & 0.000133 & 0.000133 \\
12 & 10 & 9 & 0.001279 & 0.007930 & 0.000091 & 0.000091 \\
12 & 10 & 10 & 0.002101 & 0.003911 & 0.000188 & 0.000188 \\
12 & 10 & 11 & 0.002251 & 0.001008 & 0.000102 & 0.000102 \\
12 & 20 & 2 & 0.541712 & 0.131502 & 0.088984 & 0.088984 \\
12 & 20 & 3 & 0.182222 & 0.053629 & 0.054221 & 0.054221 \\
12 & 20 & 4 & 0.023491 & 0.035654 & 0.005844 & 0.005844 \\
12 & 20 & 5 & 0.007582 & 0.066045 & 0.001260 & 0.001260 \\
12 & 20 & 6 & 0.004655 & 0.095069 & 0.001080 & 0.001080 \\
12 & 20 & 7 & 0.003479 & 0.094683 & 0.000655 & 0.000655 \\
12 & 20 & 8 & 0.002998 & 0.066049 & 0.000489 & 0.000489 \\
12 & 20 & 9 & 0.002867 & 0.032677 & 0.000622 & 0.000622 \\
12 & 20 & 10 & 0.002909 & 0.011069 & 0.000473 & 0.000473 \\
12 & 20 & 11 & 0.004368 & 0.003241 & 0.000531 & 0.000531 \\
12 & 30 & 2 & 0.675664 & 0.218843 & 0.036640 & 0.036640 \\
12 & 30 & 3 & 0.860821 & 0.452921 & 0.272321 & 0.272321 \\
12 & 30 & 4 & 0.113619 & 0.094576 & 0.022463 & 0.022463 \\
12 & 30 & 5 & 0.022348 & 0.150567 & 0.004590 & 0.004590 \\
12 & 30 & 6 & 0.010643 & 0.214816 & 0.002714 & 0.002714 \\
12 & 30 & 7 & 0.006608 & 0.214974 & 0.001451 & 0.001451 \\
12 & 30 & 8 & 0.005365 & 0.149519 & 0.001009 & 0.001009 \\
12 & 30 & 9 & 0.004868 & 0.073995 & 0.001167 & 0.001167 \\
12 & 30 & 10 & 0.005754 & 0.026320 & 0.001156 & 0.001156 \\
12 & 30 & 11 & 0.007558 & 0.006692 & 0.000621 & 0.000621 \\
12 & 40 & 2 & 0.734910 & 0.294899 & 0.027142 & 0.027142 \\
12 & 40 & 3 & 1.707456 & 1.147517 & 0.407452 & 0.407452 \\
12 & 40 & 4 & 0.333967 & 0.261492 & 0.072070 & 0.072070 \\
12 & 40 & 5 & 0.056233 & 0.271161 & 0.010262 & 0.010262 \\
12 & 40 & 6 & 0.022767 & 0.381334 & 0.005324 & 0.005324 \\
12 & 40 & 7 & 0.012934 & 0.379374 & 0.003703 & 0.003703 \\
12 & 40 & 8 & 0.009913 & 0.264329 & 0.002679 & 0.002679 \\
12 & 40 & 9 & 0.008263 & 0.131825 & 0.001861 & 0.001861 \\
12 & 40 & 10 & 0.007434 & 0.043607 & 0.001210 & 0.001210 \\
12 & 40 & 11 & 0.009919 & 0.009148 & 0.000551 & 0.000551 \\
12 & 50 & 2 & 0.779734 & 0.367064 & 0.024202 & 0.024202 \\
12 & 50 & 3 & 2.421463 & 2.009077 & 0.354848 & 0.354848 \\
12 & 50 & 4 & 0.774133 & 0.644271 & 0.206669 & 0.206669 \\
12 & 50 & 5 & 0.119219 & 0.434200 & 0.021888 & 0.021888 \\
12 & 50 & 6 & 0.037267 & 0.600390 & 0.005357 & 0.005357 \\
12 & 50 & 7 & 0.021841 & 0.593723 & 0.004912 & 0.004912 \\
12 & 50 & 8 & 0.014185 & 0.411490 & 0.002751 & 0.002751 \\
12 & 50 & 9 & 0.011684 & 0.204612 & 0.002601 & 0.002601 \\
12 & 50 & 10 & 0.011258 & 0.069297 & 0.001919 & 0.001919 \\
12 & 50 & 11 & 0.013689 & 0.013973 & 0.000568 & 0.000568 \\
12 & 60 & 2 & 0.816233 & 0.439537 & 0.018423 & 0.018423 \\
12 & 60 & 3 & 2.693961 & 2.679080 & 0.205129 & 0.205129 \\ 
12 & 60 & 4 & 1.592846 & 1.760022 & 0.388749 & 0.388749 \\
12 & 60 & 5 & 0.215926 & 0.662540 & 0.027660 & 0.027660 \\
12 & 60 & 6 & 0.051899 & 0.869770 & 0.006892 & 0.006892 \\
12 & 60 & 7 & 0.029465 & 0.856174 & 0.003905 & 0.003905 \\
12 & 60 & 8 & 0.020535 & 0.592009 & 0.003707 & 0.003707 \\
12 & 60 & 9 & 0.016413 & 0.294055 & 0.003000 & 0.003000 \\
12 & 60 & 10 & 0.015973 & 0.100025 & 0.001935 & 0.001935 \\
12 & 60 & 11 & 0.016608 & 0.020095 & 0.001954 & 0.001954 \\
12 & 70 & 2 & 0.878817 & 0.512407 & 0.027310 & 0.027310 \\
12 & 70 & 3 & 2.907783 & 3.247799 & 0.118709 & 0.118709 \\
12 & 70 & 4 & 2.544709 & 3.382021 & 0.602658 & 0.602658 \\
12 & 70 & 5 & 0.375773 & 0.970210 & 0.056235 & 0.056235 \\
12 & 70 & 6 & 0.088691 & 1.193889 & 0.008203 & 0.008203 \\
12 & 70 & 7 & 0.039939 & 1.167277 & 0.004142 & 0.004142 \\
12 & 70 & 8 & 0.027236 & 0.805773 & 0.003334 & 0.003334 \\
12 & 70 & 9 & 0.022542 & 0.398278 & 0.003602 & 0.003602 \\
12 & 70 & 10 & 0.021868 & 0.136145 & 0.002173 & 0.002173 \\
12 & 70 & 11 & 0.022725 & 0.027426 & 0.001507 & 0.001507 \\
12 & 80 & 2 & 0.916063 & 0.584089 & 0.020526 & 0.020526 \\
12 & 80 & 3 & 3.000175 & 3.744063 & 0.116326 & 0.116326 \\
12 & 80 & 4 & 4.039950 & 5.902573 & 0.739820 & 0.739820 \\
12 & 80 & 5 & 0.605805 & 1.405045 & 0.095085 & 0.095085 \\
12 & 80 & 6 & 0.122675 & 1.567368 & 0.011166 & 0.011166 \\
12 & 80 & 7 & 0.048442 & 1.526201 & 0.004978 & 0.004978 \\
12 & 80 & 8 & 0.031619 & 1.052121 & 0.003073 & 0.003073 \\
12 & 80 & 9 & 0.025010 & 0.520162 & 0.002910 & 0.002910 \\
12 & 80 & 10 & 0.023835 & 0.176667 & 0.002033 & 0.002033 \\
12 & 80 & 11 & 0.022104 & 0.035359 & 0.003123 & 0.003123 \\
12 & 90 & 2 & 0.956820 & 0.655612 & 0.026650 & 0.026650 \\
12 & 90 & 3 & 3.007996 & 4.211977 & 0.105778 & 0.105778 \\
12 & 90 & 4 & 5.229099 & 9.055124 & 1.058499 & 1.058499 \\
12 & 90 & 5 & 1.004792 & 2.103427 & 0.150663 & 0.150663 \\
12 & 90 & 6 & 0.167558 & 2.006025 & 0.019330 & 0.019330 \\
12 & 90 & 7 & 0.060676 & 1.941134 & 0.005015 & 0.005015 \\
12 & 90 & 8 & 0.037510 & 1.336043 & 0.004099 & 0.004099 \\
12 & 90 & 9 & 0.029891 & 0.660568 & 0.003790 & 0.003790 \\
12 & 90 & 10 & 0.028128 & 0.223883 & 0.002181 & 0.002181 \\
12 & 90 & 11 & 0.027881 & 0.044626 & 0.002888 & 0.002888 \\
12 & 100 & 2 & 1.002506 & 0.729048 & 0.028012 & 0.028012 \\
12 & 100 & 3 & 3.052822 & 4.684431 & 0.112754 & 0.112754 \\
12 & 100 & 4 & 6.997932 & 12.947197 & 0.998384 & 0.998384 \\
12 & 100 & 5 & 1.396664 & 2.919842 & 0.231179 & 0.231179 \\
12 & 100 & 6 & 0.228576 & 2.493792 & 0.023585 & 0.023585 \\
12 & 100 & 7 & 0.081067 & 2.396268 & 0.007638 & 0.007638 \\
12 & 100 & 8 & 0.043425 & 1.643364 & 0.003116 & 0.003116 \\
12 & 100 & 9 & 0.034354 & 0.815149 & 0.003160 & 0.003160 \\
12 & 100 & 10 & 0.032382 & 0.275801 & 0.001556 & 0.001556 \\
12 & 100 & 11 & 0.029412 & 0.055729 & 0.003257 & 0.003257 \\
12 & 500 & 2 & 2.519556 & 3.620742 & 0.028139 & 0.028139 \\
12 & 500 & 3 & 4.420212 & 23.498960 & 0.118381 & 0.118381 \\
12 & 500 & 4 & 10.189787 & 85.455540 & 0.262634 & 0.262634 \\
12 & 500 & 5 & 22.394610 & 209.313500 & 0.472883 & 0.472883 \\
12 & 500 & 6 & 36.633370 & 308.249600 & 2.110679 & 2.110679 \\
12 & 500 & 7 & 10.027627 & 94.567290 & 0.639925 & 0.639925 \\
12 & 500 & 8 & 1.864623 & 41.782080 & 0.121197 & 0.121197 \\
12 & 500 & 9 & 0.702858 & 20.111190 & 0.031041 & 0.031041 \\
12 & 500 & 10 & 0.428198 & 6.741600 & 0.008602 & 0.008602 \\
12 & 500 & 11 & 0.393686 & 1.348636 & 0.004891 & 0.004891 \\
12 & 1000 & 2 & 3.968106 & 7.223571 & 0.025157 & 0.025157 \\
12 & 1000 & 3 & 5.526174 & 46.966440 & 0.087395 & 0.087395 \\
12 & 1000 & 4 & 10.431460 & 171.134800 & 0.172144 & 0.172144 \\
12 & 1000 & 5 & 20.964530 & 421.369700 & 0.321329 & 0.321329 \\
12 & 1000 & 6 & 37.273470 & 693.403200 & 0.676447 & 0.676447 \\
12 & 1000 & 7 & 49.368140 & 706.226600 & 1.170888 & 1.170888 \\
12 & 1000 & 8 & 17.127330 & 244.114100 & 1.229378 & 1.229378 \\
12 & 1000 & 9 & 4.107391 & 83.590200 & 0.114788 & 0.114788 \\
12 & 1000 & 10 & 2.006797 & 27.200940 & 0.021436 & 0.021436 \\
12 & 1000 & 11 & 1.470922 & 5.431625 & 0.010899 & 0.010899 \\
12 & 1500 & 2 & 4.893284 & 10.827800 & 0.017423 & 0.017423 \\
12 & 1500 & 3 & 6.195878 & 70.662190 & 0.071729 & 0.071729 \\
12 & 1500 & 4 & 10.260198 & 257.414000 & 0.177076 & 0.177076 \\
12 & 1500 & 5 & 19.197140 & 633.709500 & 0.378227 & 0.378227 \\
12 & 1500 & 6 & 32.881030 & 1023.964000 & 0.657284 & 0.657284 \\
12 & 1500 & 7 & 46.220780 & 1093.901000 & 0.590068 & 0.590068 \\
12 & 1500 & 8 & 41.510060 & 721.021000 & 1.516087 & 1.516087 \\
12 & 1500 & 9 & 11.901450 & 211.916900 & 0.296781 & 0.296781 \\
12 & 1500 & 10 & 5.676148 & 61.803300 & 0.119352 & 0.119352 \\
12 & 1500 & 11 & 3.734903 & 12.157400 & 0.041770 & 0.041770 \\
12 & 2000 & 2 & 5.099039 & 14.185750 & 0.017339 & 0.017339 \\
12 & 2000 & 3 & 6.162363 & 92.788470 & 0.058511 & 0.058511 \\
12 & 2000 & 4 & 9.476573 & 340.929500 & 0.110094 & 0.110094 \\
12 & 2000 & 5 & 16.667190 & 835.140600 & 0.216039 & 0.216039 \\
12 & 2000 & 6 & 27.880520 & 1371.374000 & 0.624081 & 0.624081 \\
12 & 2000 & 7 & 38.323530 & 1445.620000 & 0.623282 & 0.623282 \\
12 & 2000 & 8 & 41.836570 & 1062.018000 & 0.392293 & 0.392293 \\
12 & 2000 & 9 & 22.558390 & 431.830400 & 0.774261 & 0.774261 \\
12 & 2000 & 10 & 11.024740 & 112.454000 & 0.122618 & 0.122618 \\
12 & 2000 & 11 & 8.052963 & 21.508270 & 0.103578 & 0.103578 \\
14 & 30 & 2 & 3.731794 & 1.036507 & 0.287231 & 0.287231 \\
14 & 30 & 3 & 3.086077 & 1.142078 & 1.068723 & 1.068723 \\
14 & 30 & 4 & 0.350067 & 0.175320 & 0.079797 & 0.079797 \\
14 & 30 & 5 & 0.046482 & 0.368453 & 0.009590 & 0.009590 \\
14 & 30 & 6 & 0.020922 & 0.683616 & 0.004622 & 0.004622 \\
14 & 30 & 7 & 0.012017 & 0.904974 & 0.002970 & 0.002970 \\
14 & 30 & 8 & 0.009316 & 0.882247 & 0.002176 & 0.002176 \\
14 & 30 & 9 & 0.007557 & 0.647876 & 0.001681 & 0.001681 \\
14 & 30 & 10 & 0.007273 & 0.358885 & 0.001784 & 0.001784 \\
14 & 30 & 11 & 0.006475 & 0.146859 & 0.001730 & 0.001730 \\
14 & 30 & 12 & 0.006113 & 0.039918 & 0.001377 & 0.001377 \\
14 & 30 & 13 & 0.008741 & 0.007911 & 0.001666 & 0.001666 \\
16 & 30 & 2 & 19.814633 & 4.837633 & 2.162247 & 2.162247 \\
16 & 30 & 3 & 12.394322 & 3.398693 & 3.869013 & 3.869013 \\
16 & 30 & 4 & 1.175649 & 0.298819 & 0.351317 & 0.351317 \\
16 & 30 & 5 & 0.104782 & 0.791696 & 0.020072 & 0.020072 \\
16 & 30 & 6 & 0.029752 & 1.805207 & 0.006063 & 0.006063 \\
16 & 30 & 7 & 0.014408 & 2.996054 & 0.003782 & 0.003782 \\
16 & 30 & 8 & 0.011086 & 3.733377 & 0.002768 & 0.002768 \\
16 & 30 & 9 & 0.009822 & 3.642316 & 0.002366 & 0.002366 \\
16 & 30 & 10 & 0.008204 & 2.805020 & 0.001850 & 0.001850 \\
16 & 30 & 11 & 0.008303 & 1.689993 & 0.001924 & 0.001924 \\
16 & 30 & 12 & 0.008223 & 0.773571 & 0.002090 & 0.002090 \\
16 & 30 & 13 & 0.008053 & 0.262174 & 0.002132 & 0.002132 \\
16 & 30 & 14 & 0.007722 & 0.062382 & 0.001772 & 0.001772 \\
16 & 30 & 15 & 0.010204 & 0.010163 & 0.001822 & 0.001822 \\
18 & 30 & 2 & 109.934667 & 22.857617 & 3.804745 & 3.804745 \\
18 & 30 & 3 & 50.454777 & 8.125150 & 25.096455 & 25.096455 \\
18 & 30 & 4 & 3.387042 & 0.480716 & 1.016212 & 1.016212 \\
18 & 30 & 5 & 0.233913 & 1.478137 & 0.051633 & 0.051633 \\
18 & 30 & 6 & 0.042572 & 4.166179 & 0.008330 & 0.008330 \\
18 & 30 & 7 & 0.021793 & 8.286597 & 0.006813 & 0.006813 \\
18 & 30 & 8 & 0.014039 & 12.646790 & 0.004095 & 0.004095 \\
18 & 30 & 9 & 0.014109 & 15.401710 & 0.003638 & 0.003638 \\
18 & 30 & 10 & 0.010921 & 15.265493 & 0.003028 & 0.003028 \\
18 & 30 & 11 & 0.011171 & 11.386507 & 0.002167 & 0.002167 \\
18 & 30 & 12 & 0.009320 & 7.225315 & 0.001483 & 0.001483 \\
18 & 30 & 13 & 0.009457 & 3.621550 & 0.001812 & 0.001812 \\
18 & 30 & 14 & 0.011758 & 1.400735 & 0.002345 & 0.002345 \\
18 & 30 & 15 & 0.011163 & 0.407041 & 0.002315 & 0.002315 \\
18 & 30 & 16 & 0.010699 & 0.085244 & 0.002586 & 0.002586 \\
18 & 30 & 17 & 0.014418 & 0.012287 & 0.002549 & 0.002549 \\
20 & 30 & 2 & 556.622875 & 93.828213 & 41.056464 & 41.056464 \\
20 & 30 & 5 & 0.491720 & 2.662911 & 0.146872 & 0.146872 \\
 \hline
\end{longtable}

\end{document}